\newtheorem{theorem}{Theorem}
\newtheorem{lemma}[theorem]{Lemma}
\newtheorem{corollary}[theorem]{Corollary}
\newtheorem{problem}[theorem]{Problem}
\theoremstyle{remark}
\newtheorem{remark}[theorem]{Remark}
\title{Listing spanning trees of outerplanar graphs by~pivot-exchanges}
\author{Nastaran Behrooznia}
\address[Nastaran Behrooznia]{Department of Computer Science, University of Warwick, United Kingdom}
\email{nastaran.behrooznia@warwick.ac.uk}
\author{Torsten M\"utze}
\address[Torsten M\"utze]{Institut f\"ur Mathematik, Universit\"at Kassel, Germany \& Department of Theoretical Computer Science and Mathematical Logic, Charles University, Prague, Czech Republic}
\email{tmuetze@mathematik.uni-kassel.de}
\thanks{This work was supported by Czech Science Foundation grant GA~22-15272S. Both authors participated in the workshop `Combinatorics, Algorithms and Geometry' in March 2024, which was funded by German Science Foundation grant~522790373.}
\begin{document}

\begin{abstract}
We prove that the spanning trees of any outerplanar triangulation~$G$ can be listed so that any two consecutive spanning trees differ in an exchange of two edges that share an end vertex.
For outerplanar graphs~$G$ with faces of arbitrary lengths (not necessarily~3) we establish a similar result, with the condition that the two exchanged edges share an end vertex or lie on a common face.
These listings of spanning trees are obtained from a simple greedy algorithm that can be implemented efficiently, i.e., in time~$\cO(n \log n)$ per generated spanning tree, where $n$ is the number of vertices of~$G$.
Furthermore, the listings correspond to Hamilton paths on the 0/1-polytope that is obtained as the convex hull of the characteristic vectors of all spanning trees of~$G$.
\end{abstract}

\maketitle

\section{Introduction}
\label{sec:intro}

For a given graph~$G$, let~$\cT(G)$ denote the set of all spanning trees of~$G$.
Two spanning trees of~$G$ differ in an \defi{edge exchange} if the symmetric difference of their edge sets is a 2-element set, i.e., each of the two spanning trees is obtained from the other one by removing one edge and adding another.
The \defi{flip graph} $\cF(G)$ has the set~$\cT(G)$ as vertex set, and an edge between any two spanning trees that differ in an edge exchange; see Figure~\ref{fig:span}.
It is well-known that~$\cF(G)$ is the skeleton of the 0/1-polytope that is obtained as the convex hull of all characteristic vectors~$\chi(T)$ of spanning trees~$T\in\cT(G)$ (see~\cite[Thm.~40.6]{MR1956925}).
Specifically, if the edge set of~$G$ is~$\{1,\ldots,m\}$, then for all $e\in\{1,\ldots,m\}$ the characteristic vector~$\chi(T)$ has a 1-bit at position~$e$ if the edge~$e$ belongs to~$T$, and a 0-bit at position~$e$ otherwise.

\begin{figure}[h!]
\includegraphics{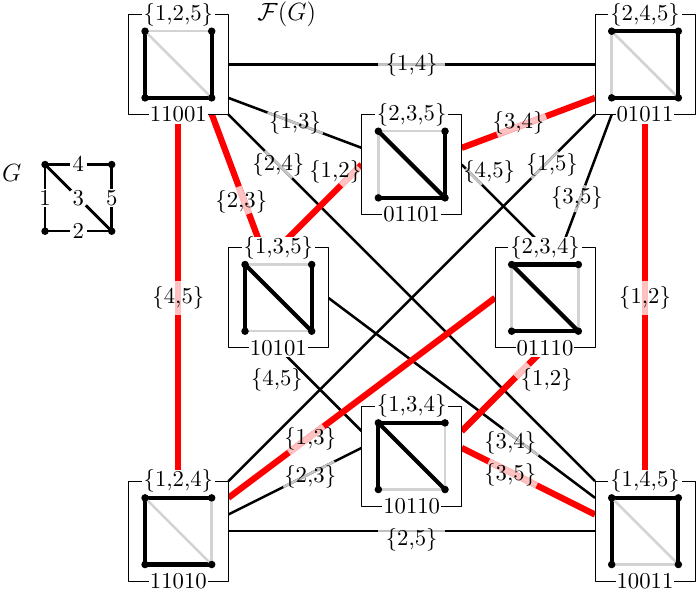}
\caption{The spanning tree flip graph~$\cF(G)$ for the `diamond' graph~$G$ on the left, with a Hamilton cycle highlighted.
For each spanning tree, the set of edges is shown above, and the characteristic vector is shown below.
Edges of~$\cF(G)$ are labelled by the two edges of~$G$ being exchanged.}
\label{fig:span}
\end{figure}

We are interested in computing Hamilton paths or cycles in the flip graph~$\cF(G)$, i.e., we aim to list all spanning trees of~$G$ such that any two consecutive spanning trees differ in an edge exchange.
Such a listing of combinatorial objects subject to some small-change condition is generally referred to as a \defi{Gray code}~\cite{MR1491049,MR4649606}.
A Gray code is called \defi{cyclic} if the first and last object also differ in the small-change condition, i.e., this corresponds to a Hamilton cycle in the flip graph.

Cummins~\cite{MR245357} first proved that~$\cF(G)$ admits a Hamilton cycle for any graph~$G$.
He showed more generally that for any prescribed edge of~$\cF(G)$, there is a Hamilton cycle containing this edge.
Similar results were obtained by Shank~\cite{shank_1968}, Kamae~\cite{MR0246789}, and Kishi and Kajitani~\cite{kishi_kajitani_1968}.

Harary and Holzmann~\cite{MR0307952} proved more generally that the base exchange flip graph of any matroid has a Hamilton cycle.
They showed this in the stronger sense that any edge of this flip graph can be prescribed to be contained in the cycle, and to be avoided by another cycle.
Even more generally, Naddef and Pulleyblank~\cite{MR638286,MR762893} proved that the skeleton of any 0/1-polytope either admits a Hamilton path between any two prescribed end vertices, or it is a hypercube, in which case it admits a Hamilton path between any two end vertices of opposite parity.

Algorithmically, a Hamilton path in~$\cF(G)$ can be computed in time~$\cO(1)$ on average per generated tree using Smith's algorithm~\cite{smith_1997} (this is Algorithm~S in~\cite[Sec.~7.2.1.6]{MR3444818}); see Figure~\ref{fig:3GCs}~(a).

Given these strong Hamiltonicity properties of the graph~$\cF(G)$, there has recently been interest to strengthen them by restricting the allowed edge exchanges.

\subsection{The pivot-exchange property}
\label{sec:vp}

We introduce some more notation.
For a graph~$G$, we write $V(G)$ and~$E(G)$ for set of vertices and edges of~$G$, respectively.
For a subgraph $H\seq G$ and edges $e\in E(H)$ and $f\in E(G)\setminus E(H)$, we write $H-e$ and $H+f$ for the graphs obtained from~$H$ by removing and adding the edges~$e$ and~$f$, respectively.
We will think of a subgraph~$H$ as a subset of edges from~$E(G)$, so the operations~$H-e$ and $H+f$ remove and add an element from the set, respectively.
Formally, an \defi{edge exchange} for a spanning tree~$T\in\cT(G)$ is a pair~$\{e,f\}$ of edges such that~$e\in E(T)$ and $f\in E(G)\setminus E(T)$ with $T-e+f=T\triangle \{e,f\}\in \cT(G)$.

Cameron, Grubb, and Sawada~\cite{MR4756593} introduced the stronger notion of a \defi{pivot}-exchange, which is an exchange~$\{e,f\}$ with the additional property that~$e$ and~$f$ have a common end vertex.
For example, in the graph~$G$ shown in Figure~\ref{fig:span}, all exchanges except~$\{1,5\}$ and~$\{2,4\}$ are pivot-exchanges.
They raised the following problem.

\begin{problem}
\label{prob:vp}
Does every graph~$G$ admit a pivot-exchange Gray code of its spanning trees?
\end{problem}

\begin{figure}[t!]
\includegraphics[page=2]{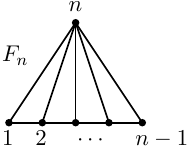}
\caption{Three different edge exchange Gray codes for listing the 21 spanning trees of the fan graph~$F_5$.
In each spanning tree, the edge removed to reach the next tree is highlighted (prefixed by~$-$ in~(c)), and the non-edge being added is dashed (prefixed by~$+$ in~(c)).
In (b) and~(c), the common end vertex of each pivot-exchange operation is highlighted, and in~(c), the common face of each face-exchange operation is highlighted.
The right-hand side of~(c) shows the characteristic vectors of each spanning tree.}
\label{fig:3GCs}
\end{figure}

Additionally, they asked if such a listing can be computed using a greedy strategy, and possibly by an efficient algorithm.

Problem~\ref{prob:vp} is a special case of a more general question raised by Knuth in Vol.~4A of his seminal series~`The Art of Computer Programming'~\cite{MR3444818}, stated as problem~102 in Section~7.2.1.6 with a difficulty rating of 46/50:
\footnote{A flawed attempt at settling this problem was published in~\cite{MR263701} (see~\cite{rao_raju_1972}).}

\begin{problem}
\label{prob:dir}
Does every directed graph admit an edge exchange Gray code of its \defi{oriented spanning trees}, also known as \defi{arborescences}, i.e., spanning trees in which all arcs are oriented away from a fixed root vertex~$r$?
\end{problem}

Note that for any exchange~$\{e,f\}$ in this directed setting, in order to preserve the arborescence property, the arcs~$e$ and~$f$ must point to the same vertex, i.e., it must be a pivot-exchange.
Furthermore, a positive answer to Problem~\ref{prob:dir} would imply an affirmative answer to Problem~\ref{prob:vp}:
Indeed, given an undirected graph~$G$, construct the directed graph~$G^\leftrightarrows$ by replacing each undirected edge of~$G$ by two oppositely oriented arcs, and pick an arbitrary vertex as root~$r$.
Listing the oriented spanning trees of~$G^\leftrightarrows$ by arc exchanges produces every spanning tree of~$G$ exactly once, namely in the orientation forced by the choice of~$r$.

\subsection{Fan graphs}

As a first step towards Problem~\ref{prob:vp}, Cameron, Grubb, and Sawada~\cite{MR4756593} provided a pivot-exchange Gray code for listing the spanning trees of \defi{fan graphs~$F_n$}, which are obtained by joining an extra vertex to all vertices of a path on~$n-1$ vertices; see Figure~\ref{fig:fan}.

\begin{figure}[h!]
\includegraphics[page=1]{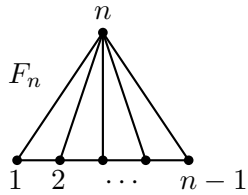}
\caption{The fan graphs~$F_n$.}
\label{fig:fan}
\end{figure}

\begin{theorem}[{\cite[Thm.~4]{MR4756593}}]
\label{thm:fan1}
For any $n\geq 3$, there is a pivot-exchange Gray code for the spanning trees of~$F_n$.
\end{theorem}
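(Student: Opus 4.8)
The plan is to induct on~$n$, assembling a pivot-exchange Hamilton path in the flip graph~$\cF(F_n)$ out of such paths for smaller fans. Write~$h$ for the apex and $v_1,\dots,v_{n-1}$ for the path, with path edges $p_i=v_iv_{i+1}$ and spokes $s_i=hv_i$. The base case $n=3$ is the triangle on~$\{h,v_1,v_2\}$: any two of its three spanning trees differ in an exchange, and since any two edges of a triangle share a vertex, every exchange is a pivot-exchange, so any ordering works. For the inductive step I peel off the last path vertex~$v_{n-1}$, which is incident to exactly the two edges $a:=p_{n-2}$ and $b:=s_{n-1}$, and partition $\cT(F_n)$ into the three blocks $A$ (trees containing~$a$ but not~$b$), $B$ (containing~$b$ but not~$a$), and $C$ (containing both).

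The blocks~$A$ and~$B$ are each in bijection with~$\cT(F_{n-1})$: deleting the leaf~$v_{n-1}$ together with its unique incident edge (namely~$a$ for~$A$, resp.\ $b$ for~$B$) is a bijection onto~$\cT(F_{n-1})$, and since this map fixes every edge of~$F_{n-1}$, an exchange performed inside~$A$ (or inside~$B$) is a pivot-exchange in~$F_n$ if and only if the corresponding exchange is a pivot-exchange in~$F_{n-1}$. Hence, by the induction hypothesis, $A$ and~$B$ each carry a pivot-exchange Hamilton path. The block~$C$, in which~$v_{n-1}$ has degree~$2$, is the structurally new part: deleting~$v_{n-1}$ turns a $C$-tree into a two-component spanning forest of~$F_{n-1}$ separating~$h$ from~$v_{n-2}$, and one checks that the component of~$v_{n-2}$ is always a subpath $v_j,\dots,v_{n-2}$ (no spoke to~$h$ may be used inside it), while the component of~$h$ is an arbitrary spanning tree of the subfan on~$\{h,v_1,\dots,v_{j-1}\}$, which is a copy of~$F_j$. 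Thus $C$ decomposes as a disjoint union $\bigsqcup_{j=1}^{n-2}\cT(F_j)$ of strictly smaller fans (with the degenerate~$F_1,F_2$ contributing a single tree each), so each $C$-block again carries a pivot-exchange Hamilton path by induction.

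The remaining and main task is to concatenate these paths, in the order~$A$, then~$C$, then~$B$, so that each junction is a single pivot-exchange; note that this order is monotone in the sense that~$b$ enters at the $A$--$C$ boundary and~$a$ leaves at the $C$--$B$ boundary. To pass from the last $A$-tree to the first $C$-tree we must add~$b=hv_{n-1}$ and delete some edge~$e$; a pivot forces~$e$ to be incident to an endpoint of~$b$, and since~$v_{n-1}$ is a leaf in~$A$ the edge~$e$ must be a spoke~$s_i$ incident to~$h$, giving the pivot-exchange~$\{s_i,b\}$. Symmetrically, passing from~$C$ to~$B$ deletes~$a=v_{n-2}v_{n-1}$ and adds an edge incident to~$v_{n-2}$ (one of~$\{p_{n-3},s_{n-2}\}$), a pivot at~$v_{n-2}$. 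Internally to~$C$, the transitions between consecutive subfan blocks (where the split index~$j$ changes) impose further pivot constraints that must be met. To make all of these junctions land on compatible trees I would strengthen the induction hypothesis to \emph{prescribe the two endpoints} of the Hamilton path for~$F_{n-1}$ and for each smaller~$F_j$, choosing a designated ``star-like'' tree at one end and a ``path-like'' tree at the other so that the required gluing exchanges are always available.

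The main obstacle is exactly this endpoint bookkeeping: one must commit in advance to a pair of boundary trees for every fan so that simultaneously the strengthened induction still delivers a Hamilton path between precisely those endpoints, and the external junctions ($A\to C$ and $C\to B$) together with all the internal $C$-junctions and their pivot constraints are jointly satisfiable. A convenient device for organizing and verifying this is the explicit encoding of a spanning tree of~$F_n$ by its (nonempty) set of present spokes together with, in each gap between consecutive present spokes, the single absent path edge (a ``hole''); in this language a pivot-exchange is either a shift of one hole to an adjacent position (a swap of two adjacent path edges) or a toggle of a spoke against an adjacent path edge, so the construction becomes a reflected-style Gray code on these encodings whose every move can be checked to be a pivot-exchange. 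Getting the reflection and concatenation pattern to close up consistently across all three blocks is where the real work lies.
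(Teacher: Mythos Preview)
The paper does not prove Theorem~\ref{thm:fan1} at all: it is quoted as a result of Cameron, Grubb, and Sawada. What the paper does prove is the far more general Theorem~\ref{thm:triang} (fan graphs being a special case of outerplane triangulations), and the method is entirely different from yours. There is no induction and no block decomposition. Instead, a fixed \emph{dual-tree} edge labeling is introduced, and the key step is Lemma~\ref{lem:pivot}: for any spanning tree~$T$ and any edge exchange~$\{e,f\}$ with $\ell(e)<\ell(f)$, there exists a pivot-exchange~$\{d,f\}$ with $\ell(d)<\ell(f)$. Combined with the genlex property of Algorithm~G (Theorem~\ref{thm:greedy}), this means that whenever the greedy rule would select some exchange, an equally admissible pivot-exchange is available and still leads to an unvisited tree, so a suitable tie-breaking rule forces Algorithm~G to use only pivot-exchanges. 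The argument is short, uniform over all outerplane triangulations, and yields an algorithm directly; the price is that it rests on the black box of Theorem~\ref{thm:greedy}.

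Your approach is a genuinely different, self-contained recursive construction. The decomposition of~$\cT(F_n)$ into the blocks~$A$, $B$, $C$ and the identification of~$C$ with the disjoint union $\bigsqcup_{j}\cT(F_j)$ are correct and natural. However, as you yourself flag, the proposal is not a proof: the strengthened induction hypothesis with prescribed endpoints is never actually stated, and the simultaneous satisfiability of all the junction constraints is never checked. This is not a cosmetic gap---it is the entire substance of such a construction. Concretely, the internal $C$-transitions (changing the split index~$j$) force the Hamilton path on each~$\cT(F_j)$ to end at a tree in which $v_{j-1}$ is a leaf attached by a specific edge, and that endpoint requirement must be compatible with whatever endpoint pair you commit to recursively for~$F_j$. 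Until you write down an explicit endpoint pair for every~$F_k$, verify the small base cases, and check that each of the $A\!\to\!C$, internal-$C$, and $C\!\to\!B$ gluings is a pivot-exchange between the designated boundary trees, you have an outline but not a proof.
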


The output of their algorithm for the case~$n=5$ is shown in Figure~\ref{fig:3GCs}~(b).
The algorithm uses a greedy strategy that prioritizes exchanges based on vertex labels.

\subsection{A simple greedy algorithm}

Merino, M\"utze and Williams~\cite{MR4473269} discovered a simple greedy algorithm for listing the spanning trees of a graph~$G$ by (arbitrary) edge exchanges.
The algorithm operates based on a total ordering of the edges of~$G$, which is captured by labeling the edges by integers.
Specifically, if $m$ denotes the number of edges of~$G$, then an \defi{edge labeling} is a bijection~$\ell:E(G)\rightarrow \{1,\ldots,m\}$.
For an edge~$e\in E(G)$, we refer to $\ell(e)$ as the \defi{label} of the edge~$e$.
In the following examples, we will often identify edges by their labels.
In particular, edge exchanges $\{e,f\}$ will be denoted by the pairs of labels $\{\ell(e),\ell(f)\}$.
We will also use the abbreviation $[m]:=\{1,\ldots,m\}$.

\begin{algo}{Algorithm~G}{Greedy edge exchanges}
This algorithm greedily generates the spanning trees of a graph~$G$ with $m$ edges via edge exchanges, using an edge labeling $\ell:E(G)\rightarrow[m]$ and an initial spanning tree~$\tT\in\cT(G)$.
\begin{enumerate}[label={\bfseries G\arabic*.}, leftmargin=10mm, noitemsep, topsep=3pt plus 3pt]
\item{} [Initialize] Visit the initial spanning tree~$\tT$.
\item{} [Exchange] Perform an edge exchange in the current spanning tree that minimizes the larger of the two edge labels in the exchange and that yields an unvisited spanning tree from~$\cT(G)$.
If no such exchange exists, then terminate.
Otherwise visit this new spanning tree and repeat~G2.
\end{enumerate}
\end{algo}

The output of Algorithm~G when applied to the fan graph~$F_5$ is shown in Figure~\ref{fig:3GCs}~(c), using the edge labeling displayed on the left.
To illustrate the greedy rule in step~G2, when reaching the sixth spanning tree~$T_6=\{1,2,5,6\}$, there are seven possible edge exchanges to obtain another spanning tree in~$\cT(F_5)$, namely $\{1,3\}$, $\{2,3\}$, $\{1,4\}$, $\{2,4\}$, $\{4,5\}$, $\{5,7\}$ and $\{6,7\}$; see Figure~\ref{fig:algoG}.
Only $\{1,4\}$, $\{2,4\}$, $\{5,7\}$ and $\{6,7\}$ give an unvisited spanning tree, and among those, $\{1,4\}$ and~$\{2,4\}$, minimize the larger label, which is~4.
In this case, the exchange~$\{2,4\}$ is applied, but $\{1,4\}$ would be a valid alternative for the algorithm.

\begin{figure}[h!]
\includegraphics[page=3]{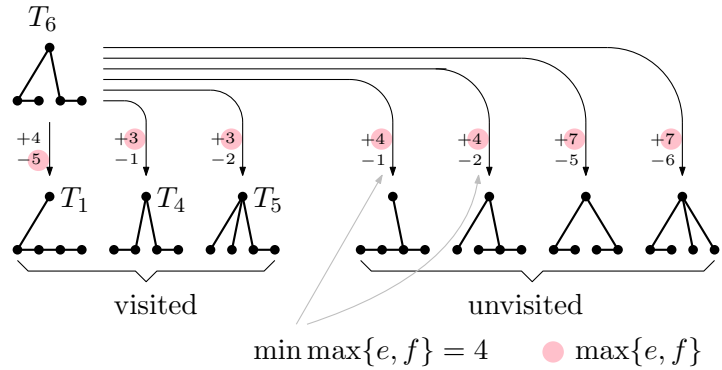}
\caption{Illustration of the sixth iteration of Algorithm~G in the run shown in Figure~\ref{fig:3GCs}~(c).}
\label{fig:algoG}
\end{figure}

In general, in step~G2 there may be several edge exchanges
\begin{equation}
\label{eq:tie}
\{e_1,f\},\{e_2,f\},\ldots,\{e_t,f\} \text{ with  } \ell(e_1)<\ell(e_2)<\cdots<\ell(e_t)<\ell(f),
\end{equation}
applicable to the current spanning tree to give an unvisited spanning tree, which all have $\ell(f)$ as the larger label, differing only in the smaller label~$\ell(e_i)$, $i\in[t]$.
We refer to such a situation as a \defi{tie}.
A \defi{tie-breaking rule} is a procedure that determines which exchange to apply in case of a tie.

By definition of the step~G2, Algorithm~G only selects edge exchanges that result in a previously unvisited spanning tree of~$G$, i.e., the produced listing of spanning trees will not contain repetitions.
However, it could be that the algorithm terminates before having visited the entire set~$\cT(G)$, a situation that is ruled out by the next theorem.

\begin{theorem}[{\cite[Thm.~10]{MR4473269}}]
\label{thm:greedy}
For any graph~$G$, for any edge labeling of~$G$, for any initial spanning tree~$\tT\in\cT(G)$, and for any tie-breaking rule, Algorithm~\upright{G} yields a genlex listing of all spanning trees of~$G$.
\end{theorem}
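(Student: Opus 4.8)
The plan is to argue by induction on the number~$m$ of edges of~$G$, peeling off the edge~$e$ with the largest label~$\ell(e)=m$ and relating the run of Algorithm~G on~$G$ to its runs on the two smaller graphs~$G-e$ and~$G/e$ obtained by deletion and contraction. Recall that a listing is \emph{genlex} precisely when, for every~$j$, all spanning trees whose characteristic vectors agree on the labels~$j+1,\ldots,m$ occupy a contiguous block; equivalently, the listing respects a recursive split according to the highest labels. The base case~$m=0$ (a single vertex, with a unique spanning tree) is immediate, so I would assume the claim for all graphs with fewer than~$m$ edges.

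The crucial point is that any exchange involving~$e$ has~$m$ as its larger label, the largest value that can ever occur, whereas every exchange avoiding~$e$ has larger label at most~$m-1$. Since step~G2 minimizes the larger label, the algorithm never touches~$e$ as long as some exchange avoiding~$e$ reaches an unvisited tree. Now~$\cT(G)$ splits into block~$A=\cT(G-e)$ (trees avoiding~$e$) and block~$B$ (trees containing~$e$, in canonical bijection with~$\cT(G/e)$ via $T\mapsto T-e$), and a non-$e$-exchange preserves the block while an $e$-exchange toggles it. Consequently, as long as the run stays within the block of the initial tree~$\tT$, its current tree and its set of visited trees coincide step by step with a run of Algorithm~G on the corresponding minor ($G-e$ or~$G/e$), under the tie-breaking rule induced on that minor: any $e$-exchange is strictly dispreferred to a block-internal exchange reaching an unvisited tree, so the greedy comparisons restrict to block-internal options that match those of the minor.

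Say $e\notin\tT$, so the run starts in block~$A$ (the case $e\in\tT$ is symmetric, and if $e$ is a bridge or~$G-e$ is disconnected then one block is empty and we reduce directly to a single minor). By the induction hypothesis applied to~$G-e$, the run emulates Algorithm~G on~$G-e$ and visits all of block~$A$ before it can be forced to touch~$e$; in particular it does not get stuck inside~$A$. Once~$A$ is exhausted, the current tree~$T^\ast$ has a fundamental cycle through~$e$ in~$T^\ast+e$, so some $e$-exchange yields a (necessarily unvisited) tree in block~$B$, and the greedy rule now performs it, entering~$B$. By the induction hypothesis applied to~$G/e$, the run then visits all of~$B$ using non-$e$-exchanges and never returns to~$A$, since a return requires an $e$-exchange, which is chosen only once no unvisited tree is reachable within~$B$, i.e.\ only after~$B$ is exhausted, at which point everything has been visited and the algorithm halts. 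Thus all of~$\cT(G)=A\cup B$ is listed. Finally, blocks~$A$ and~$B$ are each contiguous with constant bit at position~$m$, and by induction each is genlex with respect to labels~$1,\ldots,m-1$; since every class of trees agreeing on labels~$j+1,\ldots,m$ lies entirely within one block and is contiguous there, the whole listing is genlex.

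The main thing to nail down is the emulation claim — that while the run stays in one block it reproduces Algorithm~G on the minor exactly — which hinges on the strict label-disparity observation together with the exact matching of restricted labels, available exchanges, and visited set. The only place the induction hypothesis is genuinely used is to rule out premature termination inside a block: a priori the greedy rule might strand the run at a tree of~$A$ with other trees of~$A$ still unvisited, and it is precisely the inductive statement (Algorithm~G on~$G-e$ lists all of~$\cT(G-e)$) that forbids this. The remaining points — that the forced inter-block exchange always exists via a fundamental cycle, and the handling of bridges, loops, and parallel edges created by contraction — I expect to be routine.
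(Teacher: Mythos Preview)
The paper does not contain a proof of this theorem; it is quoted verbatim from~\cite{MR4473269} and invoked as a black box in the proofs of Theorems~\ref{thm:triang} and~\ref{thm:outerplane}. There is therefore no in-paper argument to compare your proposal against.

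For what it is worth, your deletion--contraction induction is the natural approach and is essentially correct. The key observations---that the edge with the largest label is touched only as a last resort, that the run restricted to either block emulates Algorithm~G on the corresponding minor, and that the inductive hypothesis guarantees each block is fully exhausted before a block switch---are all sound. The emulation claim you flag does go through: within the current block the available non-$e$-exchanges, their larger labels, and the set of unvisited trees coincide exactly with those of the minor, while any $e$-exchange is either strictly dispreferred (larger maximal label) or unavailable (its target in the other block is already visited), so step~G2 restricted to the block reproduces step~G2 on the minor verbatim, under the induced tie-breaking rule. The only cosmetic point is that after the switch into~$B$ you need not argue that an $e$-exchange is ``chosen only once no unvisited tree is reachable within~$B$''; rather, an $e$-exchange is never chosen at all, since every tree in~$A$ is already visited.
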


A listing of bitstrings is called \defi{genlex} if all strings with the same suffix appear consecutively.
In particular, all strings ending with~0 appear before all strings ending with~1, or vice versa.
A genlex listing of bitstrings is also sometimes referred to as suffix-partitioned in the literature.
Note that genlex order generalizes colexicographic order.
In the context of spanning trees, the \defi{genlex} property refers to the corresponding characteristic vectors~$\chi(T)$ of spanning trees~$T\in\cT(G)$; see the right-hand side of Figure~\ref{fig:3GCs}~(c).
In other words, all spanning trees not containing the highest-labeled edge appear before all spanning trees containing this edge, or vice versa, and this property is true recursively within the blocks.

As evidenced by this theorem, the greedy Algorithm~G is very powerful and versatile.
In fact, the algorithm can be generalized for listing the bases of any matroid by base exchanges, and even more generally, for traversing a Hamilton path on the skeleton of any 0/1-polytope~\cite{MR4795009}.

\subsection{The face-exchange property}

The paper~\cite{MR4473269} also introduced another closeness condition for edge exchanges, which is well-defined only for plane graphs.
Specifically, a \defi{face}-exchange between two spanning trees of a plane graph is an exchange of two edges that lie on a common face.
If an edge exchange is both a pivot-exchange \emph{and} face-exchange, then we refer to it as a \defi{\paf{}}-exchange.
A weaker requirement is that it is a pivot-exchange \emph{or} a face-exchange, and then we refer to it as a \defi{\pof{}}-exchange.
These notions give rise to the following question.

\begin{problem}
\label{prob:vfp}
Does every plane graph~$G$ admit a \paf{}-exchange or a \pof{}-exchange Gray code of its spanning trees?
\end{problem}

Pivot- and face-exchanges are connected through the well-known concept of the dual graph; see Figure~\ref{fig:dual}~(a).
For a plane graph~$G$, we write~$F(G)$ for the set of faces of~$G$.
The \defi{dual graph}~$G'$ is the plane graph obtained from~$G$ as follows:
For every face~$\alpha\in F(G)$, the dual graph~$G'$ has a vertex~$\alpha'\in V(G')$, and for every edge~$e$ of~$G$ between faces~$\alpha$ and~$\beta$ of~$G$, the dual graph~$G'$ has the edge~$e'=(\alpha',\beta')$.
For every vertex~$v\in V(G)$, we write $v'\in F(G')$ for the corresponding face of~$G'$ dual to it.
For a spanning tree~$T\in\cT(G)$, the \defi{dual spanning tree~$T'\in\cT(G')$} has the dual edge~$e'\in E(T')$ for every non-edge~$e\in E(G)\setminus E(T)$ and a dual non-edge $e'\notin E(T')$ for every edge $e\in E(T)$; see Figure~\ref{fig:dual}~(b).
Observe that a pivot-exchange~$\{e,f\}$ in~$T$ is a face-exchange~$\{e',f'\}$ in the dual spanning tree~$T'$; see Figure~\ref{fig:dual}~(c).
Symmetrically, a face-exchange~$\{e,f\}$ in~$T$ is a pivot-exchange~$\{e',f'\}$ in~$T'$.

\begin{figure}[h!]
\includegraphics[page=1]{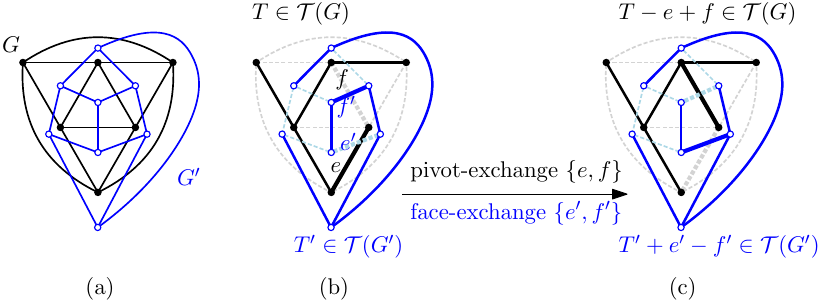}
\caption{Connection between pivot- and face-exchanges via the dual spanning tree.}
\label{fig:dual}
\end{figure}

Merino, M\"utze and Williams~\cite{MR4473269} strengthened Theorem~\ref{thm:fan1}, by showing that for a suitable edge labeling of the fan graph~$F_n$ and a suitable tie-breaking rule, Algorithm~G yields a \paf{}-exchange Gray code.
Specifically, the edges of~$F_n$ are labeled from \defi{left-to-right}, as shown in Figure~\ref{fig:3GCs}~(c), and ties are broken according to the \defi{closest} tie-breaking rule, which in case of a tie as in~\eqref{eq:tie} selects the exchange~$\{e_t,f\}$, i.e., the exchange that maximizes the smaller of the edge labels~$\ell(e_t)$ (equivalently, the one for which the smaller label is closest to the larger label~$\ell(f)$).

\begin{theorem}
\label{thm:fan2}
For any $n\geq 3$, for the left-to-right labeling of the edges of~$F_n$, for any initial spanning tree~$\tT\in\cT(F_n)$, and for the closest tie-breaking rule, Algorithm~G yields a genlex \paf{}-exchange Gray code for the spanning trees of~$F_n$.
\end{theorem}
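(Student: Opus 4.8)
The plan is to lean on Theorem~\ref{thm:greedy}, which already tells us that Algorithm~G produces a genlex listing of \emph{all} of~$\cT(F_n)$; so the only thing left to establish is that each of its transitions is a \paf{}-exchange. I will write $h$ for the apex and $v_1,\dots,v_{n-1}$ for the path, and (as in the paper) identify each edge with its label, so that the left-to-right labeling gives the spoke $hv_i$ the label $2i-1$ and the path edge $v_iv_{i+1}$ the label $2i$. The bounded faces are then the triangles $\{2i-1,2i,2i+1\}$ and the outer face is bounded by $1$, all even labels, and $2n-3$. First I would enumerate the pairs of edges sharing both a vertex and a face; this gives the clean characterization that, apart from the single outer pair $\{1,2n-3\}$, a transition $\{e,f\}$ with $e<f$ is a \paf{}-exchange exactly when $e\in\{f-1,f-2\}$. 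Hence the whole theorem reduces to showing that every transition of Algorithm~G satisfies $e\ge f-2$.

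Next I would use the two defining features of the run. Since the closest rule always takes the \emph{largest} admissible smaller label, it suffices to prove that, at the step where edge~$f$ is flipped, the largest label below~$f$ producing an unvisited tree lies in $\{f-1,f-2\}$. The genlex property makes this tractable: any exchange whose larger label is below~$f$ fixes all edges~$\ge f$, so it stays inside the current block of trees that agree above~$f$; as Algorithm~G minimizes the larger label, it flips~$f$ only after this block has been exhausted, at which point the entire opposite block is still unvisited. Thus \emph{any} exchange that flips~$f$ while fixing the edges above~$f$ lands on a fresh tree, and the task becomes purely structural: exhibit an admissible partner labeled $f-1$ or $f-2$ on the fundamental cycle of~$f$ (when $f$ is added) or across the fundamental cut of~$f$ (when $f$ is removed). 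Since the only label strictly between $f-2$ and $f$ is $f-1$, merely producing such a partner forces the closest rule to select a label in $\{f-1,f-2\}$.

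The structural step splits into four local cases. For a path edge $f=v_jv_{j+1}$ being added, the fundamental cycle starts at~$v_j$, whose only other incident edges are $hv_j=f-1$ and $v_{j-1}v_j=f-2$, so the first edge of the cycle is already $f-1$ or $f-2$. For $f=v_jv_{j+1}$ being removed, I would argue that the apex must lie on the $v_{j+1}$-side of the cut (otherwise the $v_{j+1}$-side is a right-segment of the path all of whose crossing edges are spokes of label $>f$, leaving no partner below~$f$), whence $hv_j=f-1$ crosses. For a spoke $f=hv_j$ being added, the cycle leaves~$v_j$ either leftward along $v_{j-1}v_j=f-1$ or rightward along $v_jv_{j+1}=f+1$; here the key observation is that a rightward first step can never return to an edge of label below~$f$ without revisiting~$v_j$ (the only route from the right block back to the left passes through~$h$, which closes the cycle), so such a cycle would contain no admissible partner below~$f$, which is impossible, forcing the first step to be $f-1$. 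Finally, for a spoke $f=hv_j$ being removed, I would locate~$v_{j-1}$: if it lies on the apex side then $v_{j-1}v_j=f-1$ crosses the cut, and if it lies on the $v_j$-side then $hv_{j-1}=f-2$ crosses; either way an admissible partner labeled $f-1$ or $f-2$ is produced.

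The genuinely load-bearing parts are the freshness observation of the second paragraph and the spoke-addition case of the third: pinning down the global shape of the fundamental cycle of~$hv_j$ at the exact tree where Algorithm~G flips it, and ruling out that it wanders rightward and returns, is the one place where a sloppy analysis could admit a partner below $f-2$ and hence a non-\paf{} exchange, so this is where I expect to spend the most care. A pleasant by-product of the reduction to $e\ge f-2$ is that it packages the pivot and face requirements together and makes the role of the outer face explicit and unproblematic: the consecutive path edges $\{f-2,f\}$ form a \paf{}-pair precisely through the outer face of~$F_n$ itself, so no subtlety about which faces are present ever arises.
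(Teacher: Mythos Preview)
Your argument is correct. Note, however, that the paper does not itself prove Theorem~\ref{thm:fan2}: it is quoted from Merino, M\"utze and Williams~\cite{MR4473269}, and the paper's own contribution is the more general Theorems~\ref{thm:triang} and~\ref{thm:outerplane}, whose proof via Lemma~\ref{lem:pivot} yields for fan graphs only the weaker conclusion of a pivot-exchange (not \paf{}) Gray code under \emph{some} tie-breaking rule (not specifically the closest one). So strictly speaking there is no in-paper proof to compare against.

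Your route is nonetheless close in spirit to the paper's machinery. The freshness argument---that at the moment edge~$f$ is flipped the entire opposite genlex sub-block is unvisited, so the closest rule is free to choose the structurally largest admissible partner---is exactly how the paper passes from Lemma~\ref{lem:pivot} to the proof of Theorems~\ref{thm:triang} and~\ref{thm:outerplane}. Your four structural cases play the role of Lemma~\ref{lem:pivot}, but sharpened: instead of exhibiting \emph{some} partner $d<f$ that is a pivot-exchange, you show that a valid partner always exists in $\{f-1,f-2\}$, whence the maximal one lies there too, and then use the fan-specific fact that every such pair is automatically also a face-exchange (through an inner triangle or, for two consecutive path edges, through the outer face). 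This extra precision is precisely what buys the \paf{} conclusion and compatibility with the closest rule, neither of which follows from Lemma~\ref{lem:pivot} as stated. The spoke-addition case you single out as delicate is indeed the one with content; your observation that a rightward first step of the fundamental cycle can reach~$h$ only via vertices~$v_k$ with $k\ge j$, hence only via edges of label~$>f$, is correct and is the crux.
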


In fact, the Gray code mentioned in Theorem~\ref{thm:fan1} also has the stronger \paf{}-exchange property.

\subsection{Our results}

In this work, we solve Problems~\ref{prob:vp} and~\ref{prob:vfp} for certain families of plane graphs that generalize fan graphs, thus generalizing Theorems~\ref{thm:fan1} and~\ref{thm:fan2}.

An \defi{outerplane} graph is a plane graph in which all vertices are incident to the outer face.
An outerplane graph is a \defi{triangulation} if all of its faces, except possibly the outer face, are triangles.
Note that fan graphs are a very special case of outerplane triangulations.

\begin{figure}
\begin{tabular}{ccccc}
\raisebox{-\height}{\includegraphics[page=8]{dual}} &
\raisebox{-\height}{\includegraphics{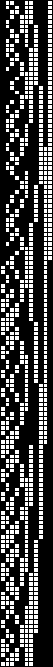}} & \hspace{10mm} &
\raisebox{-\height}{\includegraphics[page=9]{dual}} &
\raisebox{-\height}{\includegraphics{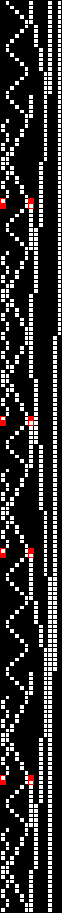}} \\
\raisebox{5mm}{} & (a) & & & (b)
\end{tabular}
\caption{(a) Pivot-exchange Gray code for the spanning trees of the outerplane triangulation~$G$.
(b) \Pof{}-exchange Gray code for the spanning trees of the outerplane graph~$G'$, which has the four marked face-exchanges~$\{1,7\}$, and the rest pivot-exchanges.
Both listings were computed by Algorithm~G, and the spanning trees are represented by their characteristic vectors, with 1-bits and 0-bits drawn as black and white squares, respectively.
The initial spanning tree is highlighted in both graphs.}
\label{fig:ex}
\end{figure}

\begin{theorem}
\label{thm:triang}
For any outerplane triangulation~$G$, there is an edge labeling of~$G$, so that for any initial spanning tree~$\tT\in\cT(G)$, there is a tie-breaking rule for which Algorithm~\upright{G} yields a genlex pivot-exchange Gray code for the spanning trees of~$G$.
\end{theorem}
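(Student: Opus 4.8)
The plan is to lean entirely on Theorem~\ref{thm:greedy}: whatever labeling and tie-breaking rule we pick, Algorithm~G already produces a genlex listing of all of~$\cT(G)$, so the only thing left to arrange is that every pair of consecutive trees differs by a \emph{pivot}-exchange. I would set up an induction on $n=|V(G)|$ using the recursive structure of outerplane triangulations obtained by peeling an \emph{ear}, i.e.\ a triangle $\{v,a,b\}$ whose tip~$v$ has degree~$2$, with ear edges $e_1=va$, $e_2=vb$ and base $f=ab$; removing~$v$ yields a smaller outerplane triangulation $H:=G-v$ on $n-1$ vertices, to which the inductive hypothesis applies. The labeling is defined recursively: label the edges of~$H$ by the labeling furnished by induction, and give the two ear edges the two largest labels, say $\ell(e_1)=m-1$ and $\ell(e_2)=m$.

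Because the top two labels sit on $e_1,e_2$, the genlex property partitions the output into blocks according to which of $e_1,e_2$ a tree contains. Since~$v$ must be connected, only three blocks are nonempty: $P_1=\{T:e_1\in T,\ e_2\notin T\}$, $P_2=\{T:e_1\notin T,\ e_2\in T\}$, and $P_3=\{T:e_1\in T,\ e_2\in T\}$, which (after fixing the two orientation choices at the top two levels) appear in the order $P_1,P_2,P_3$. The first two blocks are each in bijection with~$\cT(H)$ by deleting the pendant vertex~$v$, so the inductive hypothesis gives a pivot-exchange Gray code inside each (the pendant edge at~$v$ is never touched). The third block is in bijection with $\cT(H/f)$ --- the trees of~$H$ containing the base~$f$, equivalently the spanning trees after contracting~$f$ --- which is again an outerplane triangulation on $n-2$ vertices, so induction applies there too. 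Since the inductive statement holds for \emph{any} initial tree, it is no problem that the starting tree of each block is forced by the algorithm rather than chosen.

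It remains to check the two block transitions. Crossing from~$P_1$ into the second half the algorithm must switch on~$e_2$, and every candidate exchange then has $\ell(e_2)=m$ as its larger label, so they are all tied; the tie-breaking rule is therefore free to remove~$e_1$, which shares the vertex~$v$ with~$e_2$, a pivot-exchange that lands exactly at the (forced) first tree of~$P_2$. Crossing from~$P_2$ to~$P_3$ the algorithm switches on~$e_1$ (larger label $m-1$, again a tie over the entire fundamental cycle of~$e_1$); removing~$e_2$ is forbidden, since it would leave the $e_2=1$ half, but the fundamental cycle of~$e_1$ runs $e_1,\,[\text{tree path }a\text{-to-}b],\,e_2$ and hence enters~$a$ through a tree edge~$h$ incident to~$a$, so removing~$h$ is a pivot-exchange at~$a$ that breaks the cycle and keeps~$f$ out of the tree. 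Thus both transitions can be forced to be pivot-exchanges, and this choice folds into the recursively defined tie-breaking rule so that it restricts correctly to each block.

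The main obstacle is the third block. Under the bijection $P_3\cong\cT(H/f)$ the two vertices $a,b$ are merged into a single vertex~$u$, and a pivot-exchange at~$u$ inside~$H/f$ corresponds to exchanging two $G$-edges that may be incident to~$a$ on one side and to~$b$ on the other --- sharing no vertex in~$G$, and hence \emph{not} a pivot-exchange there. Making the induction go through therefore hinges on choosing the ear and the orientation of the labeling so that no such ``cross'' exchange at~$u$ is ever selected by the greedy rule. I expect this to be exactly where the triangulation hypothesis is essential, with the common triangle $\{a,b,w\}$ of the base edge, together with the genlex and closest-type tie-breaking discipline, guaranteeing that every exchange produced inside~$P_3$ keeps its two edges on a common vertex of~$G$. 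Verifying this preservation of the pivot property through the contraction, uniformly over all starting trees, is the crux of the argument.
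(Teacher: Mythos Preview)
Your induction has a real gap, and you already put your finger on it: the block~$P_3$.

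The first problem is the labeling. Your inductive hypothesis asserts that for each smaller outerplane triangulation there \emph{exists} a labeling that works; it does not say that the labeling you happen to inherit works. Inside $P_3$, Algorithm~G on~$G$ is running Algorithm~G on~$H/f$ with the labeling that $H/f$ inherits from your recursive labeling of~$H$, and there is no reason this coincides with the labeling the inductive hypothesis supplies for~$H/f$: the ear you peel to label~$H$ need not correspond to any ear of~$H/f$, so the two recursive constructions diverge. You would have to strengthen the hypothesis to a specific \emph{family} of labelings that is closed under the passage $H\mapsto H/f$, and then prove that closure.

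The second problem is the one you flag: even granted a pivot-exchange listing of $\cT(H/f)$, a pivot at the merged vertex $u=a{=}b$ may come from two $G$-edges, one at~$a$ and one at~$b$, sharing no vertex in~$G$. Your paragraph about the neighbouring triangle and a closest-type tie-break is a hope, not an argument; nothing in your setup forces the greedily selected exchange inside~$P_3$ to stay on one side of the contracted edge, and this is precisely the crux you would have to prove.

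The paper sidesteps both issues by abandoning the graph induction entirely. It fixes one explicit global labeling of~$G$ --- a depth-first labeling of the split dual tree, which guarantees that around every inner face the boundary edges carry increasing labels in ccw order from the unique incoming dual edge --- and proves a single local replacement lemma (Lemma~\ref{lem:pivot}): for every spanning tree~$T$ and every exchange $\{e,f\}$ with $\ell(e)<\ell(f)$, there is a pivot-exchange $\{d,f\}$ with $\ell(d)<\ell(f)$. Combined with the genlex property of Theorem~\ref{thm:greedy} (if $T\triangle\{e,f\}$ is unvisited then so is $T\triangle\{d,f\}$), this means the tie-breaking rule can always select a pivot-exchange. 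No blocks, no contractions, no inherited labelings --- the whole theorem follows in a few lines once the labeling lemma is in place.
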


\begin{theorem}
\label{thm:outerplane}
For any outerplane graph~$G$, there is an edge labeling of~$G$, so that for any initial spanning tree~$\tT\in\cT(G)$, there is a tie-breaking rule for which Algorithm~\upright{G} yields a genlex \pof{}-exchange Gray code for the spanning trees of~$G$.
\end{theorem}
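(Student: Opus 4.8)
The plan is to leave the two ``global'' requirements to the machinery already in place and to concentrate entirely on the local structure of the exchanges. Indeed, by Theorem~\ref{thm:greedy}, \emph{every} labeling, \emph{every} initial tree~$\tT$, and \emph{every} tie-breaking rule make Algorithm~\upright{G} produce a genlex listing of \emph{all} of~$\cT(G)$. Hence completeness and the genlex property come for free, and the only thing to establish is that a labeling and a tie-breaking rule can be chosen so that \emph{each} exchange actually performed by the algorithm is a \pof{}-exchange. I would prove this by induction on the number of edges of~$G$, exploiting the recursive deletion/contraction structure of Algorithm~\upright{G} (the same structure underlying the proof of Theorem~\ref{thm:greedy}): if~$e$ is the top-labeled edge, then the output splits, by the genlex property, into a contiguous block of spanning trees avoiding~$e$ followed by a contiguous block of spanning trees containing~$e$ (or vice versa), and these two blocks are themselves Algorithm~\upright{G} runs on~$G-e$ and on~$G/e$, respectively.

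The base case is what forces the weaker \pof{} conclusion rather than the pure pivot conclusion of Theorem~\ref{thm:triang}. When~$G$ is a single cycle~$C_\ell$, its spanning trees are exactly the~$\ell$ paths obtained by deleting one edge, and \emph{every} exchange deletes one edge of~$C_\ell$ and inserts another; since all edges of~$C_\ell$ lie on the common inner face, every such exchange is a face-exchange, so any labeling yields a \pof{}-exchange Gray code here. This is precisely the phenomenon absent from triangulations, where the only faces are triangles and face-exchanges coincide with pivot-exchanges.

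For the inductive step I would place the top label on a boundary edge~$e$ of~$G$ lying on a single inner face~$f$, and choose the remaining labels so that the boundary of~$f$ receives the highest labels overall, compatibly with a traversal of the weak dual tree of~$G$. If~$e$ is a bridge it lies in every spanning tree and is never exchanged, so the claim reduces immediately to~$G/e$. Otherwise~$G-e$ (which merges~$f$ into the outer face) and~$G/e$ are again outerplane with fewer edges, so by induction each block is internally a \pof{}-exchange listing for the inherited labeling, initial tree and tie-breaking rule. The one genuinely new exchange is the \emph{transition} between the two blocks, which inserts or deletes~$e$ together with some partner edge~$f_0$; I would force~$\{e,f_0\}$ to be a \pof{}-exchange by selecting a ``closest''-type tie-breaking rule, as in Theorem~\ref{thm:fan2}, so that the last tree of the first block and the first tree of the second are aligned across~$f$, making~$f_0$ either incident to an endpoint of~$e$ or a second edge of the face~$f$.

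The main obstacle is exactly this transition analysis, carried consistently through \emph{all} levels of the recursion. Two points must be controlled simultaneously: first, that the contraction~$G/e$ stays outerplane and that the face~$f$ hosting~$e$ degenerates in a predictable way, so that the inductively chosen labeling and tie-breaking rule on the minor really do extend the ones on~$G$; and second, that the greedy larger-label-minimizing rule never \emph{needs} a ``long-range'' exchange, i.e.\ one between two edges that neither share a vertex nor lie on a common face. Establishing the latter amounts to proving an invariant asserting that, at every step, the set of available minimal exchanges is confined to pairs supported on a single current face or at a single vertex; verifying that this invariant is preserved by both deletion and contraction, and that it meshes with the chosen tie-breaking rule at each transition, is where the real work lies. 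Non-$2$-connected~$G$ can be reduced to this by splitting at cut vertices and treating bridges as above.
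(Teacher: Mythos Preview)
Your opening observation is exactly right and matches the paper: Theorem~\ref{thm:greedy} gives completeness and genlex for free, so the only task is to show that at every step a \pof{}-exchange is among the minimizing candidates. But from there the paper takes a much shorter route than your deletion/contraction induction, and your scheme has a gap beyond the transition step you already flag.

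The paper fixes one explicit labeling---the \emph{dual-tree labeling}, obtained by rooting the split dual tree~$G^*$ at a leaf and labeling edges by a counterclockwise depth-first search---and then proves a single local statement (Lemma~\ref{lem:pivot}): for \emph{every} spanning tree~$T$ and \emph{every} exchange~$\{e,f\}$ with $\ell(e)<\ell(f)$, there is a \pof{}-exchange~$\{d,f\}$ with $\ell(d)<\ell(f)$. The proof is a short case analysis using two structural facts about this labeling (Lemmas~\ref{lem:labels} and~\ref{lem:neighbors}). Once Lemma~\ref{lem:pivot} is in hand, the theorem follows in two lines: by genlex, if $T\triangle\{e,f\}$ is unvisited then so is $T\triangle\{d,f\}$, hence the tie-breaking rule ``pick any available \pof{}-exchange'' never gets stuck. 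No induction, no tracking of the algorithm's run, no transition analysis.

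Your inductive plan has a difficulty you do not mention: a \pof{}-exchange in~$G/e$ need not lift to a \pof{}-exchange in~$G$. If $e=uv$ and the exchange is a pivot at the contracted vertex, the two participating edges may in~$G$ be incident to~$u$ and to~$v$ respectively, sharing neither a vertex nor a face of~$G$. For a concrete instance, take the hexagon on $1,\ldots,6$ with chords $13,14,46$ and contract $e=12$; then $\{14,23\}$ is a pivot-exchange at the merged vertex in~$G/e$, but in~$G$ the edges $14$ and $23$ are neither adjacent nor co-facial. So even if the inductive hypothesis applied to~$G/e$ with the inherited labeling (which itself would require strengthening the hypothesis from ``some labeling works'' to ``this specific labeling works''), it would only yield \pof{}-exchanges \emph{in~$G/e$}, not in~$G$. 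Repairing this forces you to prove, for your chosen labeling, precisely the local replacement statement the paper isolates as Lemma~\ref{lem:pivot}; at that point the deletion/contraction scaffolding is no longer contributing, and the direct argument is both shorter and cleaner.
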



These theorems directly yield efficient algorithms.
Specifically, using the techniques described in~\cite[Sec.~7.2+Cor.~32]{MR4795009}, Algorithm~G can be implemented to output each spanning tree in time~$\cO(n\log n)$, where $n$ is the number of vertices of~$G$.
The required space for the algorithm is~$\cO(n)$.
In particular, this implementation is \defi{history-free} in the sense that no previously computed spanning trees apart from the current spanning tree need to be stored, plus some simple data structures for bookkeeping.

Two examples of Gray code listings of spanning trees obtained from these theorems are shown in Figure~\ref{fig:ex}.

\begin{corollary}
\label{cor:outerplane}
Any outerplane triangulation admits a genlex pivot-exchange Gray code of its spanning trees.
Any outerplane graph admits a genlex \pof{}-exchange Gray code of its spanning trees.
\end{corollary}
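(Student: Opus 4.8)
The plan is to obtain the corollary as an immediate consequence of Theorems~\ref{thm:triang} and~\ref{thm:outerplane}, which have already been established. The corollary is simply the existential form of those two theorems: it suppresses the explicit reference to Algorithm~G, to the edge labeling, and to the tie-breaking rule, and retains only the assertion that the desired listing exists.

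For the first statement, I would fix an outerplane triangulation~$G$ and invoke Theorem~\ref{thm:triang}. That theorem supplies an edge labeling of~$G$ for which, after choosing \emph{any} initial spanning tree~$\tT\in\cT(G)$, some tie-breaking rule makes Algorithm~G output a genlex pivot-exchange Gray code. Since~$G$ is connected, such a spanning tree~$\tT$ exists, so I would simply pick an arbitrary one; in fact the strong ``for any~$\tT$'' quantifier of the theorem gives more than the corollary needs. The existence of the resulting listing is precisely the claim. The second statement follows in exactly the same way from Theorem~\ref{thm:outerplane}, with ``pivot-exchange'' replaced by ``\pof{}-exchange'' and ``triangulation'' by ``outerplane graph''.

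There is essentially no obstacle in this deduction: all the substantive work lives in the proofs of the two theorems, and the corollary merely records their qualitative content. The single point worth keeping in mind is that, by Theorem~\ref{thm:greedy}, Algorithm~G always traverses the \emph{entire} set~$\cT(G)$ in genlex order without repetition, so that the output guaranteed by the two theorems is a complete Gray code rather than a partial enumeration; Theorems~\ref{thm:triang} and~\ref{thm:outerplane} then add that every consecutive exchange in this complete listing satisfies the required closeness condition.
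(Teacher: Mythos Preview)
Your proposal is correct and matches the paper's approach exactly: the corollary is stated immediately after Theorems~\ref{thm:triang} and~\ref{thm:outerplane} with no separate proof, precisely because it is just the existential content of those two theorems. Your observation that one merely needs to choose some initial spanning tree (which exists since $G$ is connected) is the only detail to check, and it is trivially satisfied.
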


The \defi{weak dual} graph of a plane graph~$G$, denoted~$G^-$ is the graph obtained from the dual graph~$G'$ by removing the vertex corresponding to the outer face of~$G$; see Figure~\ref{fig:weak-dual}~(a).

\begin{figure}[h!]
\includegraphics[page=2]{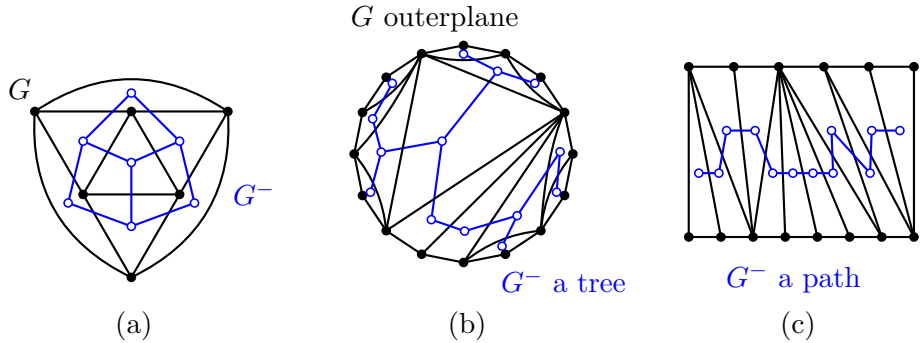}
\caption{Illustration of (a) weak dual graph; (b) 2-connected outerplane triangulation and its weak dual tree; (e) 2-connected outerplane triangulation whose weak dual tree is a path.}
\label{fig:weak-dual}
\end{figure}

Note that~$G$ is a 2-connected outerplane graph if and only if the weak dual~$G^-$ is a tree; see Figure~\ref{fig:weak-dual}~(b).
Also note that $G^-$ is a path if and only if $G$ is 2-connected and all but at most two sides of every inner face are incident with the outer face; see Figure~\ref{fig:weak-dual}~(c).
In particular, for a triangulation~$G$, we have that $G^-$ is a path if and only if $G$ is 2-connected and at least one side of every triangle touches the outer face.
This is true in particular for fan graphs~$F_n$.

We show that fan graphs, and more generally outerplane triangulations for which the weak dual is a path, are exactly the outerplane graphs that have the maximum number of spanning trees for a fixed number of edges.
Moreover, the counts are the Fibonacci numbers, defined as $f_0:=0$, $f_1:=1$ and
\begin{equation}
\label{eq:fib}
f_{m+1}:=f_m+f_{m-1}
\end{equation}
for all $m\geq 1$.
We write $t(G):=|\cT(G)|$ for the number of spanning trees of~$G$.

\begin{theorem}
\label{thm:fib}
For any outerplane graph~$G$ with $m$ edges we have $t(G)\leq f_{m+1}$, with equality if and only if $G$ is a triangulation such that $G^-$ is a path.
\end{theorem}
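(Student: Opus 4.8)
The plan is to prove the inequality $t(G)\leq f_{m+1}$ by induction on the number of edges~$m$, and then to characterize the equality case separately. For the inequality, I would peel off one edge at a time and use the deletion-contraction recursion for spanning tree counts. Recall that for any edge~$e$ of a connected graph~$G$, we have $t(G)=t(G-e)+t(G/e)$, where $G-e$ deletes~$e$ and $G/e$ contracts it, provided~$e$ is not a bridge (if $e$ is a bridge then $t(G)=t(G/e)$ and $t(G-e)=0$, so the recursion still holds with the convention that disconnected graphs have zero spanning trees). The key structural fact I would exploit is that outerplane graphs are sparse: a 2-connected outerplane graph on~$k$ vertices has at most~$2k-3$ edges, and more usefully, every outerplane graph contains a vertex of degree at most~$2$ on the outer face, or a face bounded by few edges. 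The cleanest approach is to locate an edge~$e$ lying on the outer boundary such that both $G-e$ and $G/e$ remain outerplane; then both have at most~$m-1$ edges, and applying the inductive hypothesis gives $t(G)\leq f_{m}+f_{m}$, which is too weak. So I must instead choose~$e$ so that one of the two minors has strictly fewer edges than the other loses, matching the Fibonacci recursion $f_{m+1}=f_m+f_{m-1}$.

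Concretely, I would pick~$e$ to be an edge incident with a degree-$2$ vertex~$v$ on the outer face, say with neighbors~$u,w$ and incident edges $e=uv$ and $e'=vw$. Deleting~$e$ yields an outerplane graph $G-e$ with $m-1$ edges in which~$v$ now has degree~$1$; its spanning trees must use~$e'$, so $t(G-e)=t((G-e)/e')$, a graph on $m-2$ edges. Contracting~$e$ merges~$u$ and~$v$, giving an outerplane graph $G/e$ on $m-1$ edges. The induction then yields $t(G)=t(G-e)+t(G/e)\leq f_{m-1}+f_{m}=f_{m+1}$, as desired, once I verify the base cases ($m=1$: a single edge, $t=1=f_2$; and small trees/cycles) and confirm that the minors stay outerplane and that the count bounds apply recursively. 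I would need to handle bridges and cut vertices carefully: if~$G$ has a bridge~$b$, then $t(G)=t(G/b)$ and the count reduces, so one may assume the relevant components are 2-connected, or argue that bridges only decrease the count relative to the bound.

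For the equality characterization, the strategy is to analyze when every inequality in the induction is tight. Tightness forces the chosen minors to themselves be extremal, i.e., triangulations whose weak dual is a path, and it forces the local structure at each peeling step to be a triangle sharing exactly one edge with the rest of the graph --- precisely the situation where the weak dual~$G^-$ is a path and~$G$ is a triangulation. In one direction, I would show directly that if~$G$ is a 2-connected outerplane triangulation with $G^-$ a path on~$k$ triangles, then $t(G)=f_{m+1}$, by setting up a linear recurrence: building~$G$ one triangle at a time along the dual path, each new triangle attaches along a single edge, and the number of spanning trees satisfies exactly the Fibonacci recursion. In the converse direction, I would argue that any deviation from this structure --- a non-triangular inner face, a branching in~$G^-$, or a cut vertex --- produces a strict inequality at some step, via the deletion-contraction analysis. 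The main obstacle I expect is the converse of the equality case: ruling out \emph{all} alternative extremal configurations requires showing that whenever~$G^-$ branches or a face has length exceeding three, at least one deletion-contraction step is strictly lossy. I would manage this by choosing the peeling edge~$e$ adaptively at a leaf of~$G^-$ and tracking the exact deficit introduced by a longer face (which contributes fewer spanning trees than two triangles spanning the same region) or by a branch vertex in the weak dual (where the two branches' trees cannot be freely combined).
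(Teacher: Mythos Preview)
Your strategy matches the paper's: induct on $m$ via deletion--contraction so as to hit the recursion $f_{m+1}=f_m+f_{m-1}$, and then trace tightness for the equality case. Two concrete points where your sketch is incomplete and the paper fills them in. First, contracting $e=uv$ can create a parallel edge (whenever $uw\in E(G)$), so $G/e$ is in general an outerplane \emph{multigraph}, and your inductive hypothesis, stated only for simple graphs, does not apply to it. The paper resolves this by proving the bound in the multigraph setting (Theorem~\ref{thm:fib-multi}), where the equality condition gains the extra clause ``all digons are incident with the outer face''; the simple-graph statement is then immediate. Second, for non-2-connected $G$ you say only that you will ``handle bridges and cut vertices carefully''. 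The paper's device here is the inequality $f_if_j<f_{i+j-1}$ for $i,j\geq 2$ (Lemma~\ref{lem:fib}): if $G$ has blocks with $a,b\geq 1$ edges then $t(G)=t(A)\,t(B)\leq f_{a+1}f_{b+1}<f_{m+1}$, so cut vertices force strict inequality in one stroke.

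One further difference of execution helps with the equality converse, which you rightly flag as the hard direction. Instead of peeling at a degree-$2$ vertex, the paper picks a \emph{leaf face} $\alpha=(e_1,\dots,e_t)$ of $G^-$ and applies deletion--contraction to~$e_1$. In $G-e_1$ the edges $e_2,\dots,e_{t-1}$ are forced, so $t(G-e_1)$ equals the count for a graph with $m-(t-1)$ edges; together with $t(G/e_1)\leq f_m$ this gives $t(G)\leq f_{m-t+2}+f_m$, which is strictly below $f_{m+1}$ as soon as $t>3$. Thus a non-triangular leaf face is detected immediately at that step, and branching in $G^-$ is detected inductively in $G/e_1$. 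With your degree-$2$-vertex peel the two minors always have $m-2$ and $m-1$ edges regardless of face lengths, so the slack from a long face is not visible locally and must be chased into a minor, making the equality analysis less direct.
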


The identity~$t(G)=f_{m+1}$ when~$G$ is a triangulation for which $G^-$ is a path was already noticed by Slater~\cite[Prop.~1]{MR437348}.

\begin{remark}
\label{rem:multi}
We remark that Problems~\ref{prob:vp}, \ref{prob:dir} and~\ref{prob:vfp} are perfectly valid also for multigraphs, i.e., graphs that may have parallel edges and/or loops (though loops are irrelevant in the context of spanning trees).
In fact, Theorems~\ref{thm:greedy}, \ref{thm:triang} and~\ref{thm:outerplane}, and hence Corollary~\ref{cor:outerplane}, also hold in this more general setting.
An `outerplane triangulation' in this case has all inner faces of lengths at most~3, instead of exactly~3.
The time and space bounds stated after Theorem~\ref{thm:outerplane} change to $\cO(m\log n)$ and~$\cO(m)$, respectively, where $m$ is the number of edges of the multigraph.
However, for simplicity we do our proofs in the setting of simple graphs, where no parallel edges nor loops are allowed.
Nonetheless, our proof of Theorem~\ref{thm:fib} will actually use multigraphs.
\end{remark}

\begin{remark}
The listings of spanning trees produced by Algorithm~$G$ are in general not Hamilton cycles in~$\cF(G)$, but only Hamilton paths, i.e., the first and last spanning tree in general do not differ in an edge exchange.
This remark applies to all the Gray codes mentioned in Theorems~\ref{thm:greedy}, \ref{thm:fan2}, \ref{thm:triang}, and~\ref{thm:outerplane}, and in Corollary~\ref{cor:outerplane}.
For some graphs, some edge labelings, and some initial spanning trees, however, the Gray codes are cyclic, such as the one mentioned in Theorem~\ref{thm:fan2} for the L-shaped initial spanning tree shown in Figure~\ref{fig:3GCs}~(c), which ends with the mirrored~L.
\end{remark}

\subsection{Related work}

There has been an extensive amount of work~\cite{MR401486,MR495152,MR1453415,MR1146706,MR1355446,MR1448628} on efficiently generating the set~$\cT(G)$ of all spanning trees of~$G$, \emph{without} the requirement that any two consecutive trees differ in an edge exchange, i.e., the computed listings are \emph{not} Hamilton paths or cycles in the flip graph~$\cF(G)$.
The algorithms in the last three of the aforementioned papers achieve this in time~$\cO(1)$ on average per generated tree.
See the survey~\cite{chakraborty_et_al_2019} for a comparison of the different algorithms.

If instead of listing all spanning trees of~$G$, we want to count them, then this can be achieved by Kirchhoff's Matrix-Tree Theorem, which reduces the problem to computing a determinant, which can be done efficiently.
This problem is closely connected to finding the so-called \defi{most vital edge} of~$G$, which is the edge contained in the most spanning trees from~$\cT(G)$.
Random sampling~\cite{DBLP:conf/focs/Broder89,MR1069105} and ranking/unranking~\cite{MR993191,MR1379225} of spanning trees have also been considered.

\subsection{Outline of this paper}

In Section~\ref{sec:listing} we prove Theorems~\ref{thm:triang} and~\ref{thm:outerplane}.
In Section~\ref{sec:fib} we prove Theorem~\ref{thm:fib}.
We conclude with some open questions in Section~\ref{sec:open}, and there we also report on some experimental evidence.

\section{Proof of Theorems~\ref{thm:triang} and~\ref{thm:outerplane}}
\label{sec:listing}

For the proofs of Theorems~\ref{thm:triang} and~\ref{thm:outerplane}, we will assume w.l.o.g. that the outerplane graph~$G$ is 2-connected.
If $G$ is not 2-connected, then all the arguments presented in the following apply to each of its blocks, and the spanning trees of~$G$ are obtained by combining the spanning trees in each block in all possible ways, and pivot- or face-exchanges remain valid within the blocks.

We first describe the labeling of edges of a 2-connected outerplane graph~$G$ that we use in order to run Algorithm~G on the graph~$G$.
We then establish two important properties of these labelings (Lemmas~\ref{lem:labels} and~\ref{lem:neighbors}) that are crucial to show that whenever an arbitrary edge exchange becomes applicable in step~G2 of Algorithm~G, then ties can be broken to instead use a pivot- or face-exchange (Lemma~\ref{lem:pivot}).

\subsection{The edge labeling}

To define the edge labeling, we need another definition.
Let $v$ be the vertex of~$G'$ corresponding to the outer face of~$G$, and let $d$ be the degree of~$v$.
The \defi{split dual} graph, denoted~$G^*$, is obtained from~$G'$ by splitting~$v$ into $d$ many degree-1 vertices, one adjacent to each neighbor of~$v$; see Figure~\ref{fig:split-dual}~(a).

Note that the weak dual graph~$G^-$ is a tree if and only if the split dual graph~$G^*$ is a tree; see Figures~\ref{fig:weak-dual}~(b) and~\ref{fig:split-dual}~(b).
Furthermore, $G^-$ is a path if and only if $G^*$ is a caterpillar, i.e., a tree in which all vertices are in distance~$\leq 1$ from a central path; see Figures~\ref{fig:weak-dual}~(c) and~\ref{fig:split-dual}~(c).

\begin{figure}[h!]
\includegraphics[page=3]{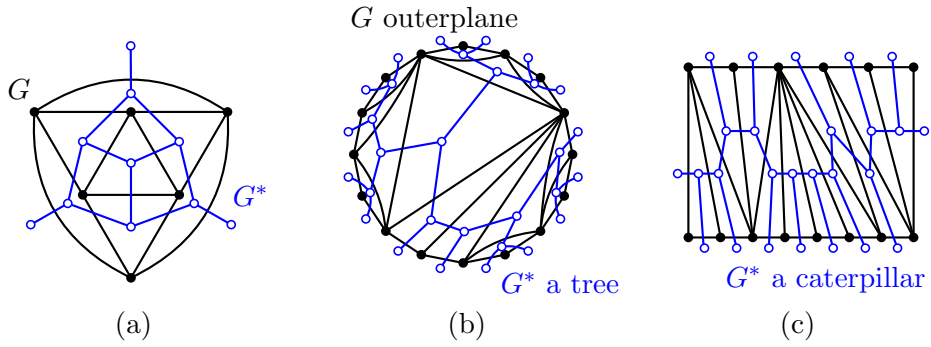}
\caption{Illustration of (a) split dual graph; (b) 2-connected outerplane triangulation and its split dual tree; (c) 2-connected outerplane triangulation whose split dual tree is a caterpillar (cf. Figure~\ref{fig:weak-dual}).}
\label{fig:split-dual}
\end{figure}

Let $G$ be a 2-connected outerplane graph with $m$ edges.
We consider the split dual tree~$G^*$, we pick a leaf~$r$ of this tree as root, and we orient all of its edges away from~$r$, giving an oriented tree~$\rvec{G^*}$, referred to as \defi{oriented split dual}.
We label the edges of~$\rvec{G^*}$ in a depth-first-search manner with integers~$1,\ldots,m$, starting at the root~$r$ and processing subtrees in counterclockwise (ccw) order; see Figure~\ref{fig:label}.
The labeling of the edges of~$\rvec{G^*}$ induces a labeling of the corresponding dual edges of~$G$ with integers from~$[m]$.
We refer to this labeling of the edges of~$G$ as \defi{dual-tree labeling}.
Note that there is freedom in the choice of the root in the tree~$G^*$, and different choices yield different oriented split dual trees~$\rvec{G^*}$ and hence different edge labelings of the same graph~$G$.
Note that the left-to-right labeling of the fan graph~$G=F_n$ shown in Figure~\ref{fig:3GCs}~(c) is one particular dual-tree labeling.

\begin{figure}[h!]
\includegraphics[page=4]{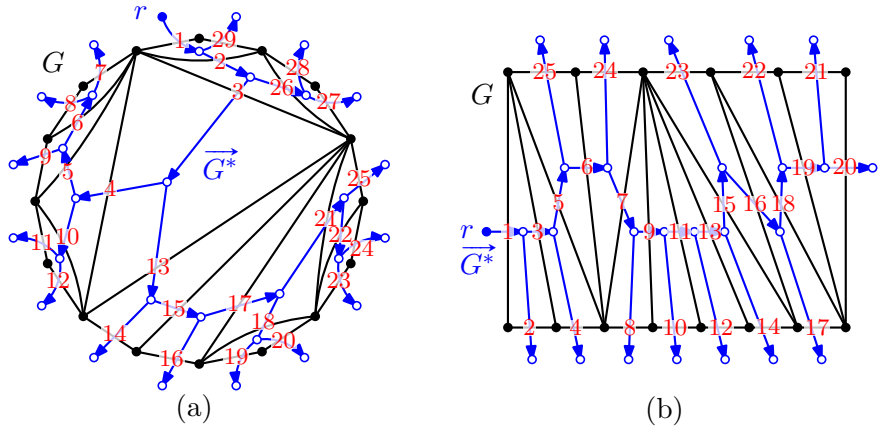}
\caption{Illustration of the dual-tree labeling procedure.}
\label{fig:label}
\end{figure}

\subsection{Proofs of theorems}

Throughout this section, we let $G$ be a 2-connected outerplane graph with $m$ edges, $\rvec{G^*}$ an oriented split dual tree, and $\ell:E(G)\rightarrow[m]$ the corresponding dual-tree labeling of the edges of~$G$.
Before proving Theorems~\ref{thm:triang} and~\ref{thm:outerplane}, we first derive two properties of the edge labelings defined in the previous section, stated in Lemmas~\ref{lem:labels} and~\ref{lem:neighbors} below.

\begin{figure}[h!]
\includegraphics[page=5]{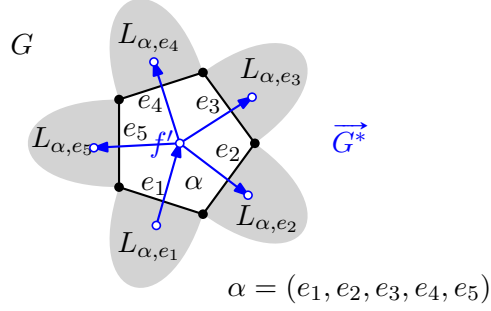}
\caption{Illustration of lobes.}
\label{fig:lobe}
\end{figure}

The following definitions are illustrated in Figure~\ref{fig:lobe}.
We denote a face~$\alpha$ of~$G$ of length~$t$ by the sequence of edges~$(e_1,\ldots,e_t)$ bounding this face in ccw order, starting with the edge~$e_1$ whose dual edge in~$\rvec{G^*}$ is oriented towards~$\alpha'$ (all other edges of~$\rvec{G^*}$ are oriented away from~$\alpha'$).
For a face~$\alpha=(e_1,\ldots,e_t)$ and an index~$i\in[t]$, the \defi{$(\alpha,e_i)$-lobe $L_{\alpha,e_i}$} is the set of edges of~$G$ that are dual to the edges in the maximal subtree of~$G^*$ that contains the edge~$e_i'$, but none of the other edges $e_j'$, $j\in[t]\setminus \{i\}$.

\begin{lemma}
\label{lem:labels}
For any face $\alpha=(e_1,\ldots,e_t)$ of~$G$, we have $\ell(e_1)<\ell(e_2)<\cdots<\ell(e_t)$.
Furthermore, for any $i\in\{2,\ldots,t\}$ and $f\in L_{\alpha,e_i}\setminus\{e_i\}$ we have $\ell(e_i)<\ell(f)$, and $\ell(f)<\ell(e_{i+1})$ if $i<t$.
\end{lemma}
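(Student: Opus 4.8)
The plan is to read off both assertions directly from the mechanics of how the depth-first-search (DFS) procedure assigns the labels $1,\ldots,m$ to the edges of $\rvec{G^*}$, transported to $G$ via duality. Fix the face $\alpha=(e_1,\ldots,e_t)$ and look at its dual vertex $\alpha'$ in $\rvec{G^*}$. Since $\alpha'$ has degree $t\geq 3$ while the root is a leaf, $\alpha'$ is not the root; by the convention fixing the starting edge, $e_1'$ is the unique edge at $\alpha'$ oriented towards $\alpha'$, i.e., it joins $\alpha'$ to its parent, while $e_2',\ldots,e_t'$ lead to the children of $\alpha'$. First I would record two structural facts about a DFS edge labeling of a rooted plane tree: (P1) the parent edge of a vertex receives a smaller label than every edge of the subtree hanging below that vertex, since it is traversed before that subtree is explored; and (P2) when the children of a vertex are processed in a fixed order, the child edges together with their subtrees are labeled in disjoint consecutive blocks, so everything labeled inside one child's subtree lies strictly between that child edge and the next child edge. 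Both are one-line consequences of the recursive structure of DFS.

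Next I would match the indexing. Because $e_1,\ldots,e_t$ bound $\alpha$ in ccw order, under plane duality the incident edges $e_1',\ldots,e_t'$ occur in ccw rotational order around $\alpha'$; since the DFS processes subtrees in ccw order starting just after the parent edge $e_1'$, it visits the children in exactly the order $e_2',e_3',\ldots,e_t'$. With this in hand the first claim is immediate: $e_1'$ is labeled during the descent to $\alpha'$, hence before any child is processed, giving $\ell(e_1)<\ell(e_2)$; and processing the children in order yields $\ell(e_2)<\cdots<\ell(e_t)$ by (P2), so altogether $\ell(e_1)<\ell(e_2)<\cdots<\ell(e_t)$.

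For the second claim I would unpack the lobe. Recalling that $G^*$ is a tree (as $G$ is $2$-connected), $L_{\alpha,e_i}$ is by definition dual to the maximal subtree of $G^*$ containing $e_i'$ but none of the other $e_j'$; since every $e_j'$ is incident to $\alpha'$, this subtree is exactly $e_i'$ together with the component of $\rvec{G^*}-\alpha'$ on the far side of $e_i'$. Thus for $i\in\{2,\ldots,t\}$ the set $L_{\alpha,e_i}\setminus\{e_i\}$ is dual to the subtree strictly below the child edge $e_i'$. Applying (P1) gives $\ell(e_i)<\ell(f)$ for every $f\in L_{\alpha,e_i}\setminus\{e_i\}$, and, when $i<t$, applying (P2)---the whole subtree below $e_i'$ is labeled before the next child edge $e_{i+1}'$ is traversed---gives $\ell(f)<\ell(e_{i+1})$.

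I expect the only genuinely delicate point to be the orientation bookkeeping in the second paragraph: verifying that the ccw boundary order of the face $\alpha$ translates into the ccw rotational order of the dual edges around $\alpha'$, and hence that the DFS's ccw child-processing convention visits the children in the order $e_2',\ldots,e_t'$ rather than the reverse. Once this correspondence is pinned down, both chains of inequalities are routine consequences of the preorder nature of DFS edge labeling, and (P1)--(P2) can either be invoked as standard properties or established in a line by induction on the DFS recursion.
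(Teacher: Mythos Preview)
Your proposal is correct and follows the same approach as the paper: both derive the inequalities directly from the DFS labeling mechanics on~$\rvec{G^*}$, with $e_1'$ as the parent edge of~$\alpha'$ and $e_2',\ldots,e_t'$ as the children visited in ccw order. The paper's proof is in fact a two-sentence sketch (``immediate consequence of the labeling procedure\ldots all edges of~$L_{\alpha,e_i}\setminus\{e_i\}$ are labeled directly after~$e_i$, and directly before~$e_{i+1}$''), so your explicit articulation of properties (P1) and (P2) and the orientation matching is a strictly more detailed version of the same argument.
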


\begin{proof}
This is an immediate consequence of the labeling procedure which processes subtrees in the oriented split dual in ccw order and in a depth-first-search manner.
Specifically, for all $i\in\{2,\ldots,t\}$, all edges of~$L_{\alpha,e_i}\setminus\{e_i\}$ are labeled directly after~$e_i$, and directly before~$e_{i+1}$ if $i<t$.
\end{proof}

For a vertex~$v$ of~$G$, the \defi{incidence list~$E_v$} is the sequence of edges incident to~$v$ in clockwise order around~$v$, starting and ending with the two edges that bound the outer face; see Figure~\ref{fig:neighbors}.
We say that an edge~$e$ incident with~$v$ is a \defi{cw-edge} or \defi{ccw-edge}, respectively, according to the clockwise or counterclockwise orientation of the dual edge~$e'$ in~$\rvec{G^*}$ around~$v$.

\begin{lemma}
\label{lem:neighbors}
For any vertex~$v$, let $E_v=:(e_1,\ldots,e_t)$ be its incidence list.
Then there is an index~$i\in\{0,\ldots,t\}$ such that $e_1,\ldots,e_i$ are ccw-edges, $e_{i+1},\ldots,e_t$ are cw-edges, and we have $\ell(e_i)<\ell(e_{i-1})<\cdots<\ell(e_1)<\ell(e_{i+1})<\ell(e_{i+2})<\cdots<\ell(e_t)$.
In particular, for any cw-edge~$e_j$ and any~$i\in[j-1]$ we have $\ell(e_i)<\ell(e_j)$.
\end{lemma}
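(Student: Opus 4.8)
The plan is to translate the statement into the geometry of the oriented split dual tree $\rvec{G^*}$ and then read the labels off the depth-first-search labeling. The starting observation is that the dual edges $e_1',\ldots,e_t'$ of the edges incident with $v$ trace out a path in $G^*$. Walking clockwise around $v$ we alternately meet the edges $e_1,\ldots,e_t$ and the faces $\beta_1,\ldots,\beta_{t-1}$ separating consecutive edges, where $\beta_j$ lies between $e_j$ and $e_{j+1}$, so in the dual each $e_j'$ joins the two vertices dual to the faces flanking $e_j$. Writing $w_0,w_t$ for the two degree-$1$ split vertices coming from the outer face (which flanks the outer-bounding edges $e_1$ and $e_t$) and $w_j:=\beta_j'$ for $1\le j\le t-1$, the edge $e_j'$ joins $w_{j-1}$ and $w_j$, so that $e_1',\ldots,e_t'$ form the path $w_0-w_1-\cdots-w_t$ in the tree $G^*$.

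Next I would locate the \emph{apex} of this path, i.e., the vertex $w_i$ closest to the root $r$ in $G^*$, and take this $i$ as the index in the statement. Since $\rvec{G^*}$ is oriented away from $r$, the distance to $r$ is unimodal along the path, so every edge $e_j'$ with $j\le i$ points towards $w_0$ (from $w_j$ to $w_{j-1}$) and every edge $e_j'$ with $j>i$ points towards $w_t$ (from $w_{j-1}$ to $w_j$). Matching the clockwise incidence order around $v$ with the direction $w_0\to w_1\to\cdots\to w_t$ along the boundary of the dual face $v'$, the edges pointing towards $w_0$ are precisely the counterclockwise ones and those pointing towards $w_t$ the clockwise ones, which gives the claimed split $e_1,\ldots,e_i$ (ccw) and $e_{i+1},\ldots,e_t$ (cw). The boundary cases $i=0$ and $i=t$, where one of the two groups is empty, are handled identically.

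The two monotone chains then follow from the ancestor--descendant relation in the DFS tree. For $j\le i$ the edge $e_{j-1}'$ (oriented $w_{j-1}\to w_{j-2}$) is a child edge of $w_{j-1}$, hence is traversed strictly after its ancestor $e_j'$ (which is the edge by which DFS reaches $w_{j-1}$); since an ancestor edge always receives a smaller label than its descendants, this yields $\ell(e_i)<\ell(e_{i-1})<\cdots<\ell(e_1)$, and the symmetric argument on the $w_t$-side gives $\ell(e_{i+1})<\ell(e_{i+2})<\cdots<\ell(e_t)$. It then remains to glue the two chains by showing $\ell(e_1)<\ell(e_{i+1})$, and here I would invoke Lemma~\ref{lem:labels} for the face $\beta_i$ (which exists exactly when $0<i<t$, the only case where both chains are nonempty). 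Both $e_i$ and $e_{i+1}$ are edges of $\beta_i$ incident with $v$, hence consecutive on its boundary, and neither is the distinguished first edge of $\beta_i$, whose dual points \emph{into} $w_i=\beta_i'$ while $e_i',e_{i+1}'$ point out of $w_i$. A check of orientations shows that $e_i$ precedes $e_{i+1}$ in the counterclockwise boundary order, and the subpath $w_0-\cdots-w_{i-1}$ lies entirely in the lobe $L_{\beta_i,e_i}$; thus $e_1,\ldots,e_{i-1}\in L_{\beta_i,e_i}\setminus\{e_i\}$, and the second part of Lemma~\ref{lem:labels} gives $\ell(e_k)<\ell(e_{i+1})$ for all $k\le i-1$, in particular $\ell(e_1)<\ell(e_{i+1})$. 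Concatenating the three facts yields the full chain, and the final ``in particular'' assertion is then immediate, since any cw-edge $e_j$ dominates every $e_k$ with $k<j$ in the chain.

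I expect the main obstacle to be the orientation bookkeeping: correctly identifying ``clockwise around $v$'' with the direction along the dual path, and ``counterclockwise processing of subtrees'' with the counterclockwise boundary order of $\beta_i$, so that the ccw-edges genuinely land on the $w_0$-side of the apex and $e_i$ genuinely precedes $e_{i+1}$. Everything else—the path structure, the two monotone chains via DFS labels, and the application of Lemma~\ref{lem:labels}—is routine once these conventions are pinned down.
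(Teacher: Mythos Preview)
Your proposal is correct and follows essentially the same approach as the paper: both observe that the dual edges $e_1',\ldots,e_t'$ form a path in~$\rvec{G^*}$, locate the apex~$w_i$ closest to the root to split the incidence list into ccw-edges and cw-edges, derive the two monotone chains (the paper does this by applying Lemma~\ref{lem:labels} face by face along the path, you do it directly from the DFS ancestor--descendant relation, which amounts to the same thing), and then glue the two chains via Lemma~\ref{lem:labels} applied to the apex face~$\beta_i=f_i$ using that $e_1\in L_{f_i,e_i}$. Your orientation bookkeeping is right: in the ccw boundary order of~$f_i$ the edge~$e_i$ indeed immediately precedes~$e_{i+1}$, which is exactly what is needed for the second clause of Lemma~\ref{lem:labels}.
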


\begin{figure}[h!]
\includegraphics[page=6]{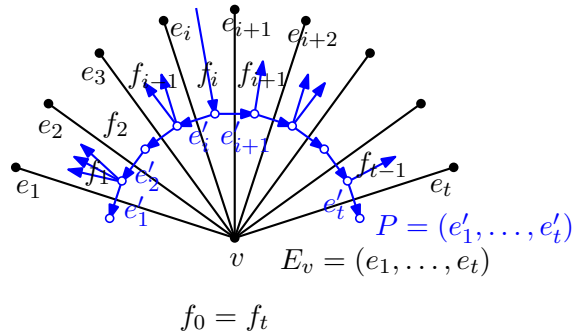}
\caption{Illustration of Lemma~\ref{lem:neighbors}.}
\label{fig:neighbors}
\end{figure}

\begin{proof}
For any sequence of edges~$P=(a_1,\ldots,a_t)$ that form a path in the oriented tree~$\rvec{G^*}$, there is an index~$i\in\{0,\ldots,t\}$ such that $a_1,\ldots,a_i$ are oriented towards the start vertex of~$P$ and $a_{i+1},\ldots,a_t$ are oriented towards the end vertex of~$P$.
The special cases~$i=0$ and~$i=t$ mean that the path is oriented conformly, towards the end or start vertex, respectively.

The sequence of dual edges~$P:=(e_1',\ldots,e_t')$ is a path in~$\rvec{G^*}$ to which the aforementioned observation applies.
For $j\in\{1,\ldots,t-1\}$, let $f_j$ be the face of~$G$ bounded by~$e_j$ and~$e_{j+1}$.
Furthermore, we denote the outer face of~$G$ by~$f_0=f_t$.
Note that $e_j'$ is oriented towards~$f_{j-1}'$ in~$\rvec{G^*}$ for $j=1,\ldots,i$, i.e., these are all ccw-edges w.r.t.~$v$.
Furthermore, $e_j'$ is oriented towards~$f_j'$ in~$\rvec{G^*}$ for $j=i+1,\ldots,t$, i.e., these are all cw-edges w.r.t.~$v$.

Applying Lemma~\ref{lem:labels} to the faces $f_j$, $j=1,\ldots,i-1$ and $j=i+1,\ldots,t-1$, yields $\ell(e_i)<\ell(e_{i-1})<\cdots<\ell(e_1)$ and $\ell(e_{i+1})<\ell(e_{i+2})<\cdots<\ell(e_t)$, respectively.
Furthermore, applying Lemma~\ref{lem:labels} to $L_{f_i,e_i}$ and~$L_{f_i,e_{i+1}}$ proves that $\ell(e_1)<\ell(e_{i+1})$.
Combining these inequalities proves the lemma. 
\end{proof}

\begin{lemma}
\label{lem:pivot}
Let $T\in\cT(G)$ and let~$\{e,f\}$ with $\ell(e)<\ell(f)$ be an edge exchange for~$T$.
Then there is a \pof{}-exchange~$\{d,f\}$ with $\ell(d)<\ell(f)$.
Furthermore, if $G$ is a triangulation, then $\{d,f\}$ is a pivot-exchange.
\end{lemma}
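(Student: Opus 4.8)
\textbf{Proof plan for Lemma~\ref{lem:pivot}.}
The plan is to take the given edge exchange $\{e,f\}$ with $\ell(e)<\ell(f)$ and, while keeping the larger edge $f$ fixed, replace the smaller edge $e$ by a suitable edge $d$ with $\ell(d) \le \ell(e)$ so that $\{d,f\}$ is still a valid exchange for $T$ but additionally shares an end vertex with $f$ (a pivot-exchange) or lies on a common face with $f$ (a face-exchange). The key structural fact I would use is that $T - e + f \in \cT(G)$ precisely when $f$ lies on the unique cycle $C_f$ of $T + f$, and $e$ is any edge of that cycle. So the set of admissible partners for $f$ is exactly $E(C_f)$, and I am free to choose $d \in E(C_f)$. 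The whole task thus reduces to exhibiting an edge of the fundamental cycle $C_f$ that is adjacent to $f$ or cofacial with $f$, and has label below $\ell(f)$.

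First I would locate $f$ on the face structure: since $f \in E(G)\setminus E(T)$, consider the two faces of $G$ incident with $f$, or equivalently look at $f$ as a cw- or ccw-edge at each of its two endpoints via Lemma~\ref{lem:neighbors}. The natural candidate for $d$ is an edge of $T$ adjacent to $f$ at one of its endpoints. Around an endpoint $v$ of $f$, Lemma~\ref{lem:neighbors} tells me that the labels of the incident edges are ordered so that every edge preceding a cw-edge in the incidence list has a smaller label; I would pick $f$ as a cw-edge at the appropriate endpoint, so that the two neighbors of $f$ on the relevant face have smaller labels than $\ell(f)$. The essential point is that among the two faces bordering $f$, the face $\alpha$ for which $f$ is \emph{not} the distinguished edge $e_1$ guarantees, by Lemma~\ref{lem:labels}, that the edge $e_1(\alpha)$ preceding $f$ in the ccw face order satisfies $\ell(e_1(\alpha)) < \ell(f)$. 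I would then argue that at least one such smaller-labeled cofacial or adjacent edge $d$ must lie on the fundamental cycle $C_f$, making $\{d,f\}$ a valid \pof{}-exchange with $\ell(d)<\ell(f)$.

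For the triangulation refinement, I would use that every inner face is a triangle, so the face $\alpha$ bordering $f$ (other than the outer face) is bounded by exactly three edges, two of which share an endpoint with $f$. In this case cofaciality forces adjacency: the edge $d$ chosen cofacially with $f$ automatically shares an end vertex with $f$, upgrading the \pof{}-exchange to a genuine pivot-exchange. I would therefore single out, among the edges of the triangle $\alpha$ on the cycle $C_f$, the one adjacent to $f$ with the smaller label, and verify it does the job.

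The main obstacle I anticipate is the \emph{existence on the fundamental cycle}: it is not automatic that the low-labeled cofacial or adjacent edge I identify actually lies on $C_f$ and hence yields a valid exchange. The labeling lemmas control labels but not directly the tree structure of $T$, so I expect to need a separate combinatorial argument — likely tracing the path in $T$ between the endpoints of $f$ and using planarity (the Jordan-curve/lobe structure captured by the $(\alpha,e_i)$-lobes) to show that this path must cross into the face $\alpha$ or use one of the adjacent edges. The interplay between ``which edge lies on $C_f$'' and ``which edge has a small label'' is where the real work sits; the two labeling lemmas are the tools that let me conclude that whatever edge of $C_f$ I am forced to use as a partner for $f$ can be taken to be adjacent or cofacial without increasing the larger label.
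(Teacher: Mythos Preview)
Your plan has a genuine gap: you tacitly assume $f\notin T$ throughout. You write ``since $f\in E(G)\setminus E(T)$'' and build everything on the fundamental cycle $C_f$ of $T+f$, but an edge exchange $\{e,f\}$ with $\ell(e)<\ell(f)$ can equally well have $f\in T$ and $e\notin T$ (remove $f$, add $e$). In that case there is no cycle $C_f$; the admissible partners for $f$ are the edges of the fundamental \emph{cut} of $f$ in $T$, and your cycle-based argument does not apply. The paper's proof splits exactly into these two cases. For $f\notin T$ it does essentially what you sketch---take $d$ to be the edge of the fundamental cycle incident with $f$ on the correct side of the face $\alpha$, and use Lemmas~\ref{lem:labels} and~\ref{lem:neighbors} to get $\ell(d)<\ell(f)$; this already yields a pivot-exchange with no triangulation hypothesis needed. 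The case you miss, $f\in T$, is where the face-exchange enters: here the paper chooses $d=e_j$, the boundary edge of $\alpha$ whose lobe $L_{\alpha,e_j}$ contains $e$, obtaining a face-exchange with $\ell(d)<\ell(f)$ by Lemma~\ref{lem:labels}. The triangulation hypothesis is used precisely in this case to force $e_j$ and $f=e_i$ to share a vertex (since $t=3$). So your instinct that ``cofaciality forces adjacency in a triangle'' is right, but it is needed in the case you omitted, not in the one you treated.

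A smaller point: you aim for $\ell(d)\le\ell(e)$, which is stronger than the lemma requires and is not what the paper proves (nor is it always attained by the paper's choice of $d$); only $\ell(d)<\ell(f)$ is needed, and that is what should drive your search for $d$.
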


\begin{proof}
We let~$\alpha=(e_1,\ldots,e_t)$ be the face incident with~$f$ in~$G$ for which the dual edge~$f'$ in~$\rvec{G^*}$ is oriented away from~$\alpha'$.
As~$\ell(f)>\ell(e)\geq 1$ we have $\ell(f)>1$, and consequently $\alpha$ is not the outer face, but an inner face.
We have $f=e_i$ for some $i\in\{2,\ldots,t\}$.
The graph~$T\cup\{e,f\}$ contains exactly one cycle~$C$, which contains both edges~$e$ and~$f$.
As $\ell(e)<\ell(f)$, Lemma~\ref{lem:labels} implies that $e\notin L_{\alpha,f}$, and as $e\in C$, we obtain that $C$ contains edges from every lobe~$L_{\alpha,e_i}$ for all $i\in[t]$, and furthermore $e\in L_{\alpha,e_j}$ for some $j\in\{1,\ldots,i-1\}$.

We now distinguish the cases whether $f\in T$ or $f\notin T$, i.e., whether the exchange $\{e,f\}$ adds the edge~$e$ and removes~$f$, or removes~$e$ and adds~$f$, respectively.
These two cases are illustrated in Figure~\ref{fig:pivot}~(a) and~(b), respectively.

{\bf Case~(a):} $f\in T$.
Note that $\{d,f\}$ with $d:=e_j$ is a valid face-exchange for~$T$, and we have $\ell(d)<\ell(f)$ by Lemma~\ref{lem:labels}.
Furthermore, if $G$ is a triangulation, then we have $t=3$ and consequently $f=e_i$ and $d=e_j$ share an end vertex, so the exchange~$\{d,f\}$ is a pivot-exchange.

{\bf Case~(b):} $f\notin T$.
Let $d$ be the edge of~$C$ incident with~$f$ in the lobe~$L_{\alpha,e_{i-1}}$.
Note that $\{d,f\}$ is a valid pivot-exchange for~$T$.
If $i>2$, then we have $\ell(d)<\ell(f)$ by Lemma~\ref{lem:labels}.
If $i=2$, then let $v$ be the common end vertex of~$e_1=e_j$ and $e_2=e_i$, and consider the incidence list~$E_v$ of~$v$.
The edge~$e_1$ is a cw-edge and $d$ comes before it (or is equal to~$e_1$) on the list~$E_v$, and so Lemma~\ref{lem:neighbors} yields~$\ell(d)\leq \ell(e_1)$.
By Lemma~\ref{lem:labels} we have $\ell(e_1)<\ell(f)$ and therefore $\ell(d)<\ell(f)$.

\begin{figure}[h!]
\includegraphics[page=7]{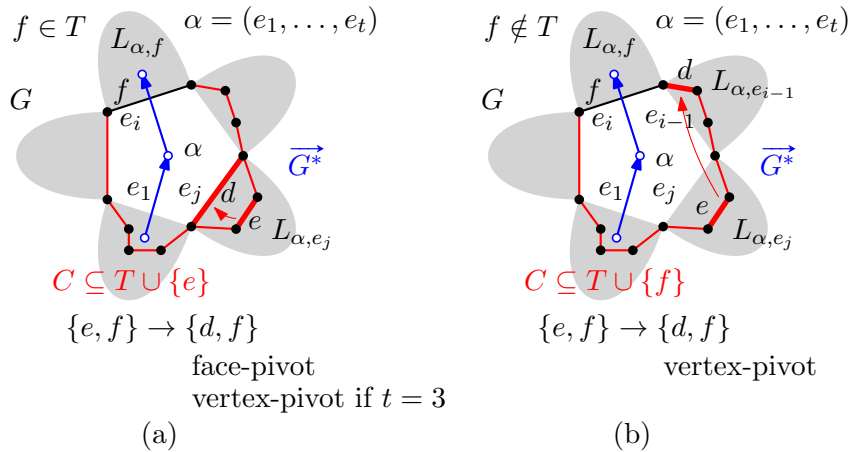}
\caption{Illustration of the two cases in the proof of Lemma~\ref{lem:pivot}.}
\label{fig:pivot}
\end{figure}
\end{proof}

We are now in position to prove Theorems~\ref{thm:triang} and~\ref{thm:outerplane}.

\begin{proof}[Proof of Theorems~\ref{thm:triang} and~\ref{thm:outerplane}]
By Lemma~\ref{lem:pivot}, whenever Algorithm~G considers an edge exchange~$\{e,f\}$ with $\ell(e)<\ell(f)$, there is a \pof{}-exchange~$\{d,f\}$ with $\ell(d)<\ell(f)$.
By Theorem~\ref{thm:greedy}, the listing of spanning trees produced by Algorithm~G has the genlex property, which implies that if $T\triangle\{e,f\}$ is unvisited, then $T\triangle\{d,f\}$ is also unvisited.
It follows that there is a tie-breaking rule for Algorithm~G to only ever use \pof{}-exchanges.

Furthermore, if $G$ is a triangulation, then by Lemma~\ref{lem:pivot} the alternative exchange~$\{d,f\}$ is a pivot-exchange, so there is a tie-breaking rule for Algorithm~G to only ever use pivot-exchanges.
\end{proof}

According to the proof, the tie-breaking rule for Algorithm~G that works is to select an arbitrary pivot-exchange or \pof{}-exchange, respectively.

\section{Proof of Theorem~\ref{thm:fib}}
\label{sec:fib}

We prove Theorem~\ref{thm:fib} in the more general setting of multigraphs (recall Remark~\ref{rem:multi}), i.e., from now on we allow multiple edges between pairs of vertices.
Two edges between the same pair of vertices are sometimes referred to as \defi{parallel} edges.
Loops may be present, but are never contained in a spanning tree and hence irrelevant for us.
However, the distinction of parallel edges is important.
For example, the plane graph formed by two vertices connected by $m$ parallel edges has $m$ different spanning trees, each containing exactly one of the $m$ edges.

All notions introduced in Section~\ref{sec:intro} are valid in the more general context of multigraphs.
The only exception is the definition of an \defi{outerplane triangulation}, which we change from `face length~3' to `face length $\leq 3$'.
A length-2 face comes from two parallel edges, and is called a \defi{digon}.

In addition to edge removal and edge addition, we now also introduce the operation of \defi{edge contraction}.
Given a graph~$G$ and an edge~$e\in E(G)$, we write $G/e$ for the graph obtained from~$G$ by contracting the edge~$e$.
Note that even if $G$ is a simple graph, $G/e$ may still contain parallel edges, and if $G$ has parallel edges, then $G/e$ may contain loops (which can be ignored when counting spanning trees).
The number of spanning trees~$t(G)$ of a graph~$G$ obeys the well-known recursive relation
\begin{equation}
\label{eq:tG}
t(G)=t(G-e) + t(G/e),
\end{equation}
valid for any (non-loop) edge~$e\in E(G)$ (see~\cite{minty_1965}).

We need the following auxiliary lemma about Fibonacci numbers.

\begin{lemma}
\label{lem:fib}
For any two integers $i,j \geq 1$, we have $f_i \cdot f_j \leq f_{i+j-1}$, with equality if and only if $i=1$ or $j=1$.
\end{lemma}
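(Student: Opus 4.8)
```latex
The plan is to prove the Fibonacci product inequality $f_i \cdot f_j \leq f_{i+j-1}$ by a direct manipulation using a known product-to-sum identity for Fibonacci numbers, together with the basic fact that the Fibonacci sequence is nonnegative and nondecreasing for indices~$\geq 1$. The crucial tool I would invoke is the convolution-type identity
\begin{equation*}
f_{i+j-1} = f_i \, f_j + f_{i-1}\, f_{j-1},
\end{equation*}
which holds for all $i,j \geq 1$ (with the convention $f_0 = 0$). This identity follows from the matrix form of the Fibonacci recurrence, namely that the powers of $\left(\begin{smallmatrix}1&1\\1&0\end{smallmatrix}\right)$ have Fibonacci entries, and that multiplying exponents corresponds to multiplying matrices; alternatively it follows by a short induction on~$j$ using~\eqref{eq:fib}. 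Either derivation is routine, so I would state the identity and give the one-line justification rather than grind through it.

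Granting the identity, the inequality is immediate: since $f_{i-1} \geq 0$ and $f_{j-1} \geq 0$ for all $i,j \geq 1$, the extra term $f_{i-1} f_{j-1}$ is nonnegative, whence $f_i f_j \leq f_{i+j-1}$. For the equality characterization, observe that equality holds if and only if $f_{i-1} f_{j-1} = 0$, and because $f_k = 0$ exactly when $k = 0$ (for $k \geq 0$), this product vanishes precisely when $i-1 = 0$ or $j-1 = 0$, i.e.\ when $i = 1$ or $j = 1$. This matches the claimed equality condition exactly, so the characterization drops out for free once the identity is in hand.

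The only real obstacle is establishing the product identity cleanly; everything after it is a one-step nonnegativity argument. If I wished to avoid citing the matrix identity, I would instead fix $i$ and induct on~$j$: the base case $j = 1$ reads $f_i = f_i \cdot 1 + f_{i-1} \cdot 0$, which holds, and the inductive step uses $f_{i+j} = f_{i+j-1} + f_{i+j-2}$ together with the recurrences $f_{j+1} = f_j + f_{j-1}$ and $f_j = f_{j-1} + f_{j-2}$ to verify $f_{i+(j+1)-1} = f_i f_{j+1} + f_{i-1} f_j$. This keeps the proof self-contained within the definition~\eqref{eq:fib} already fixed in the paper, which I would prefer for a short auxiliary lemma of this kind.
```
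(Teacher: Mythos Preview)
Your proof is correct and takes a genuinely different route from the paper's. The paper argues by induction on $i+j$: after disposing of the equality cases $i=1$ or $j=1$, it assumes $i,j\geq 2$ and does a case split on $j\in\{2,3\}$ versus $j>3$, in the latter case expanding $f_j=f_{j-2}+f_{j-1}$ and applying the induction hypothesis to each summand. Your approach instead invokes the convolution identity $f_{i+j-1}=f_if_j+f_{i-1}f_{j-1}$, after which both the inequality and the equality characterization are immediate from the nonnegativity of $f_{i-1}f_{j-1}$ and the fact that $f_k=0$ only for $k=0$. Your argument is slicker and yields the equality case with no extra work; the paper's argument avoids the need to establish (or cite) the auxiliary identity and stays entirely within the bare recurrence~\eqref{eq:fib}. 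One small point: your inductive sketch for the identity uses the hypothesis at both $j$ and $j-1$, so strictly speaking two base cases ($j=1$ and $j=2$) are needed rather than the single one you state; this is routine, but worth making explicit if you write it out in full.
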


\begin{proof}
If $i=1$ or $j=1$, then one can check directly that the claimed equality holds.

Now assume that $i,j\geq 2$.
We prove that $f_i\cdot f_j < f_{i+j-1}$ by induction on $i+j$.
In the base case $i+j=4$ we have $i=j=2$ and therefore $f_i=f_j=f_2=1$ and $f_{i+j-1}=f_3=2$, so the claim is true.
For the induction step we distinguish the cases $j\in\{2,3\}$ and $j>3$.
If $j=2$ we have $f_i \cdot f_j = f_i < f_{i+1} = f_{i+j-1}$ (recall that $i\geq 2$).
If $j=3$ we have $f_i \cdot f_j = 2f_i < f_i+f_{i+1} = f_{i+2} = f_{i+j-1}$.
It remains to consider the case $j>3$.
We have
\[
f_i \cdot f_j = f_i \cdot (f_{j-2} + f_{j-1}) = f_i \cdot f_{j-2} + f_i \cdot f_{j-1} < f_{i+j-3} + f_{i+j-2} = f_{i+j-1},
\]
where the strict inequality in the third step holds by induction, using that $j-2>1$.
\end{proof}

Theorem~\ref{thm:fib} is an immediate consequence of the following more general statement for multigraphs.

\begin{theorem}
\label{thm:fib-multi}
For any outerplane (multi)graph~$G$ with $m$ edges we have $t(G)\leq f_{m+1}$, with equality if and only if $G$ is a triangulation such that $G^-$ is a path and all digons are incident with the outer face.
\end{theorem}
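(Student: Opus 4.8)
The plan is to prove the more general Theorem~\ref{thm:fib-multi} by induction on the number of edges~$m$, using the deletion--contraction recursion~\eqref{eq:tG} together with the auxiliary Lemma~\ref{lem:fib}. First I would reduce to the case that $G$ is 2-connected. If $G$ is disconnected then $t(G)=0<f_{m+1}$. If $G$ is connected with a cut vertex, I decompose $G$ into its blocks $G_1,\dots,G_k$ with $m_1,\dots,m_k$ edges, so that $t(G)=\prod_i t(G_i)$ and $\sum_i m_i=m$. Applying the induction hypothesis to each block and then Lemma~\ref{lem:fib} repeatedly gives $t(G)\le\prod_i f_{m_i+1}\le f_{m+1}$, and since every block has $m_i\ge 1$ the equality case of Lemma~\ref{lem:fib} never triggers for $k\ge 2$, so the inequality is strict. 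This is consistent with the claimed equality characterization, because $G^-$ being a path forces $G$ to be 2-connected.

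For 2-connected $G$ the weak dual $G^-$ is a tree. If $G^-$ is a single vertex then $G$ is a cycle $C_m$, for which $t(C_m)=m\le f_{m+1}$ with equality exactly for $m\in\{2,3\}$ (a digon or a triangle); this is the base case. Otherwise I pick a leaf~$\alpha$ of the tree~$G^-$, say of length~$t$, with unique internal edge~$g=\{x,y\}$ shared with its neighbouring face; the remaining $t-1$ edges of~$\alpha$ form a path~$Q$ from~$x$ to~$y$ through $t-2$ degree-2 vertices. Let $H$ be obtained from~$G$ by deleting the interior of this path, so that $H$ is again a 2-connected outerplane graph (its weak dual is $G^-$ with the leaf removed), with $m-(t-1)$ edges, and $g$ is now a boundary edge of~$H$. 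The combinatorial heart of the argument is the peeling identity
\[
t(G)=(t-1)\,t(H)+t(H/g).
\]
I would prove this by repeatedly applying~\eqref{eq:tG} to the edges of the pendant path~$Q$: deleting the last path edge leaves a pendant path whose edges are forced, contributing $t(H)$, while contracting it shortens the face by one. Iterating reduces $\alpha$ to a digon, whose two cases (deleting and contracting the parallel edge) contribute $t(H)$ and $t(H/g)$, respectively.

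Since $H$ has $m-(t-1)$ edges and $H/g$ is outerplane with $m-t$ edges, the induction hypothesis gives $t(H)\le f_{m-t+2}$ and $t(H/g)\le f_{m-t+1}$. Writing $k:=m-t+1\ge 1$, it then remains to verify the Fibonacci inequality $(t-1)f_{k+1}+f_k\le f_{t+k}$ for all $t\ge 2$, $k\ge 1$, with equality if and only if $t\in\{2,3\}$. This follows by induction on~$t$: the difference $D(t):=f_{t+k}-(t-1)f_{k+1}-f_k$ satisfies $D(2)=0$ and $D(t+1)-D(t)=f_{t+k-1}-f_{k+1}$, which is $0$ for $t=2$ and strictly positive for $t\ge 3$, whence $D(t)=0$ exactly for $t\in\{2,3\}$ and $D(t)>0$ for $t\ge 4$. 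Combining, $t(G)\le(t-1)f_{m-t+2}+f_{m-t+1}\le f_{m+1}$, which proves the bound.

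For the equality characterization I trace through when all inequalities are tight. If $t(G)=f_{m+1}$ then, because the coefficient $t-1$ is positive, I must have $t(H)=f_{m-t+2}$ (so $H$ is extremal by induction) and the Fibonacci step must be tight, forcing $t\in\{2,3\}$; thus the leaf face is a triangle or a digon. Conversely, gluing a triangle or a boundary digon onto an extremal~$H$ along the boundary edge~$g$ extends the dual path by one vertex and keeps every inner face of length $\le 3$ with all digons on the outer face, so $G$ is extremal; here I also need $t(H/g)=f_{m-t+1}$ to conclude $t(G)=(t-1)f_{m-t+2}+f_{m-t+1}=f_{m+1}$. I expect the main obstacle to be exactly this last point: controlling the contraction $H/g$ in the equality analysis, i.e.\ showing that contracting the boundary edge~$g$ at the end of the dual path of an extremal graph again yields an extremal graph (a triangulation with path-shaped weak dual and digons on the outer face). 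This requires a careful local analysis of how $g$ sits on the end face of~$H^-$ and of what contraction does to that face and to the dual path, and it is where the digon bookkeeping and the multigraph subtleties of Remark~\ref{rem:multi} genuinely enter.
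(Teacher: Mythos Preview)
Your approach is correct: the peeling identity $t(G)=(t-1)\,t(H)+t(H/g)$ and the Fibonacci inequality $(t-1)f_{k+1}+f_k\le f_{t+k}$ (with equality iff $t\in\{2,3\}$) both check out, and the obstacle you flag for the equality case is genuine but resolvable---the extremality of $H/g$ is precisely what forces $\beta$ to be an \emph{end} of the path~$H^-$, since if $\beta$ were internal then contracting~$g$ would turn the triangle~$\beta$ into a digon with both sides internal, violating the ``digons incident with the outer face'' condition. The paper, however, takes a shorter route: rather than iterating down to the peeling identity, it performs a \emph{single} deletion--contraction on one boundary edge~$e_1$ of the leaf face~$\alpha$, splitting into the cases $t=2$ and $t\ge 3$. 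In each case one of $G-e_1$, $G/e_1$ has $m-1$ edges, while the other---after discarding forced pendant edges (when $t\ge 3$) or loops arising from parallels (when $t=2$)---has at most $m-2$ edges, so the ordinary recurrence $f_m+f_{m-1}=f_{m+1}$ suffices and your auxiliary Fibonacci inequality is not needed. This also streamlines the equality analysis in the case $t\ge 3$: since contracting a boundary edge leaves the weak dual unchanged, $(G/e_1)^-=G^-$, so extremality of $G/e_1$ directly yields that $G^-$ is a path, sidestepping the $H/g$ analysis you identify as the crux. Your argument is a little longer but has the virtue of making the dependence on the face length~$t$ explicit through the peeling identity.
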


Note that digons incident with the outer face correspond to degree~1 vertices in the dual graph~$G^-$, so if $G^-$ is a path, then there are at most two such digons corresponding to end vertices of the path.

Theorem~\ref{thm:fib-multi} is illustrated in Figure~\ref{fig:fib}.

\begin{figure}[h!]
\includegraphics[page=1]{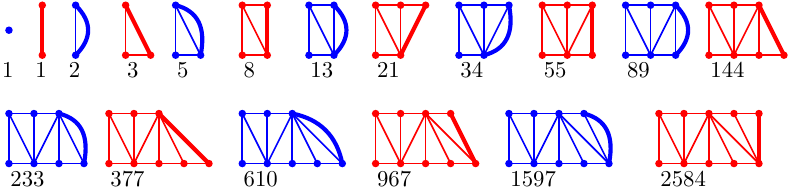}
\caption{Sequence of outerplane graphs that maximize the number of spanning trees, with the corresponding counts, the Fibonacci numbers.
They are all triangulations for which $G^-$ is a path, and they either have no digons (red) or one digon incident with the outer face (blue).
For each graph, deleting or contracting the bold edge yields the two preceding graphs.}
\label{fig:fib}
\end{figure}

\begin{proof}
We argue by induction on~$m$.
The induction basis~$m=0$ is trivial.
For the induction step we assume that~$m\geq 1$.

If $G$ is not 2-connected, then it has at least two blocks~$A$ and~$B$ with $a\geq 1$ and~$b\geq 1$ edges, respectively, where $a+b=m$.
We have $t(G)=t(A)\cdot t(B)$ and therefore $t(G)\leq f_{a+1}\cdot f_{b+1}<f_{a+b+1}=f_{m+1}$, where the strict inequality follows from Lemma~\ref{lem:fib}.

It remains to consider the case that $G$ is 2-connected.
If $G$ is a single edge, then $t(G)=1=f_2$.
If $G$ has only one inner face, then it is a cycle and we have $t(G)=m\leq f_{m+1}$, and this inequality is tight for $m\in\{2,3\}$ and strict for $m\geq 4$.
The same estimates hold if all inner faces of~$G$ are digons, i.e., $G$ has 2 vertices and all $m$ edges are parallel to each other.
For the rest of the proof we assume that $G$ has at least two faces, not all of which are digons.
This implies in particular that $m\geq 4$.
We consider a face~$\alpha$ of~$G$ for which the dual vertex~$\alpha'$ has degree at most~1 in~$G^-$.
We denote the sequence of edges bounding~$\alpha$ by $e_1,\ldots,e_t$ in clockwise order, ending with the edge~$e_t$ that has a dual edge in~$G^-$.
We distinguish two cases, namely whether $\alpha$ is a digon ($t=2$) or not ($t\geq 3$), and we bound the number of spanning trees using~\eqref{eq:tG} with respect to the edge~$e:=e_1$, i.e., the first edge of the face~$\alpha$.
These two cases are illustrated in Figure~\ref{fig:fib-proof}~(a) and~(b), respectively.

{\bf Case~(a):} $t=2$.
$G-e$ has $m-1$ edges, and therefore
\begin{equation}
\label{eq:tGmea}
t(G-e)\leq f_m
\end{equation}
by induction.
Let $s\geq 2$ be the number of edges that are parallel to~$e_1$ (including~$e_1$ and~$e_2$), and denote them by $e_1,e_2,\ldots,e_s$ in ccw order around the corresponding common end vertex; see Figure~\ref{fig:fib-proof}~(a).
Furthermore, we let $\beta$ be the face incident with~$e_s$ but not~$e_{s-1}$.
In the graph~$G/e$, the edges~$e_2,\ldots,e_s$ are loops, and we can therefore remove all of them, so we have
\begin{equation}
\label{eq:tGcea}
t(G/e)=t(G/e-\{e_2,\ldots,e_s\})\leq f_{m-1}
\end{equation}
by induction.
Combining \eqref{eq:fib}, \eqref{eq:tG}, \eqref{eq:tGmea}, and~\eqref{eq:tGcea} proves that $t(G)\leq f_{m+1}$, as claimed.
Furthermore, this inequality is tight if and only if~\eqref{eq:tGmea} and~\eqref{eq:tGcea} are tight, which happens if and only if the conditions stated in the theorem hold for $G-e$ and~$G/e$, which happens if and only if $s=2$ and $\beta$ is a triangle that is incident with the outer face, if and only if the conditions stated in the theorem hold for~$G$.

\begin{figure}[h!]
\includegraphics[page=2]{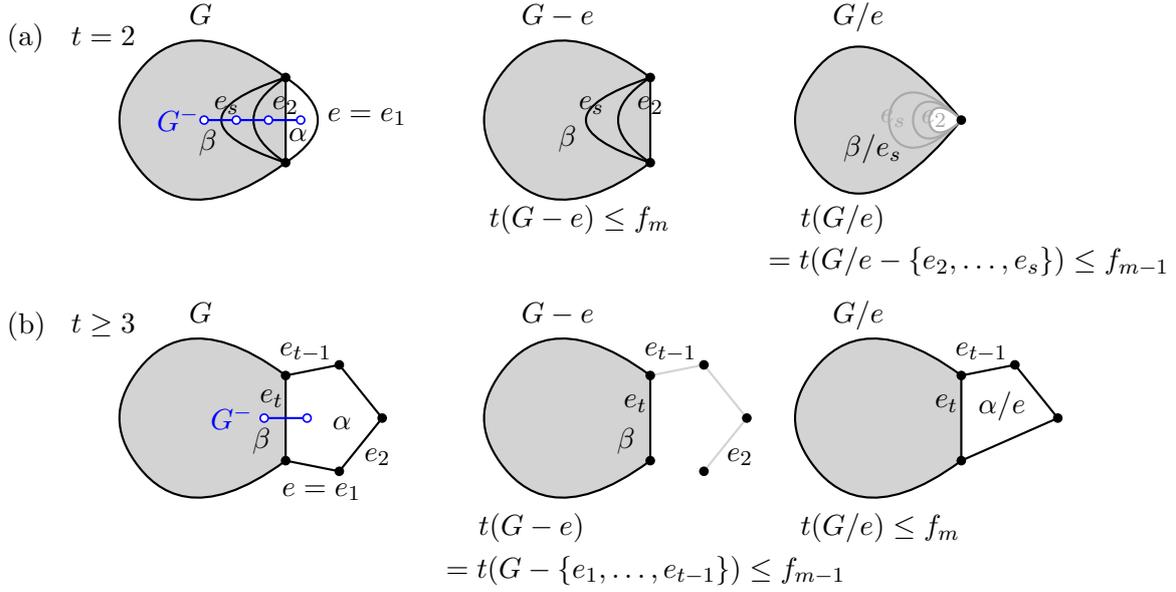}
\caption{Illustration of the proof of Theorem~\ref{thm:fib-multi}.}
\label{fig:fib-proof}
\end{figure}

{\bf Case~(b):} $t\geq 3$.
Let $\beta$ be the face incident with~$e_t$ but not~$e_1$.
$G/e$ has $m-1$ edges, and therefore
\begin{equation}
\label{eq:tGceb}
t(G/e)\leq f_m
\end{equation}
by induction.
In the graph~$G-e$, the edges $e_2,\ldots,e_{t-1}$ are present in every spanning tree, so we have 
\begin{equation}
\label{eq:tGmeb}
t(G-e)=t(G-\{e_1,\ldots,e_{t-1}\})\leq f_{m-1} 
\end{equation}
by induction.
Combining \eqref{eq:fib}, \eqref{eq:tG}, \eqref{eq:tGceb} and~\eqref{eq:tGmeb} proves that $t(G)\leq f_{m+1}$, as claimed.
Furthermore, this inequality is tight if and only if~\eqref{eq:tGceb} and~\eqref{eq:tGmeb} are tight, which happens if and only if the conditions stated in the theorem hold for $G/e$ and~$G-e$, which happens if and only if $t=3$ and $\beta$ is a triangle that is incident with the outer face, if and only if the conditions stated in the theorem hold for~$G$.
\end{proof}

\section{Open problems}
\label{sec:open}

Problems~\ref{prob:vp} and~\ref{prob:vfp} remain open for general graphs, i.e., for graphs not covered by Theorems~\ref{thm:triang} and~\ref{thm:outerplane} (recall Remark~\ref{rem:multi}).
Also, Knuth's more general Problem~\ref{prob:dir} for directed graphs remains very much open.

Concerning Problem~\ref{prob:vp}, we checked experimentally that all 2-connected graphs on up to 6 vertices admit a pivot-exchange Gray code for their spanning trees, and these Gray codes are all cyclic.
Regarding Problem~\ref{prob:dir}, we checked that all directed graphs on up to 5 vertices (oppositely oriented arcs are allowed) whose underlying undirected graph is 2-connected admit an edge exchange Gray code for their arborescences, for any choice of fixed root.
In some cases those flip graphs admit no Hamilton cycle, but only a Hamilton path (see~\cite{rao_raju_1972}), so not all of these Gray codes are cyclic.
Regarding Problem~\ref{prob:vfp}, we checked that all outerplane graphs on up to 6 vertices admit a cyclic \paf{}-exchange Gray code for their spanning trees.

Furthermore, Knuth~\cite{MR3444818} asked in Exercise~101 in Section~7.2.1.6 whether the complete graph~$K_n$ admits a `nice' Gray code listing of its spanning trees.
One interpretation of this question, in the spirit of Problem~\ref{prob:vp}, could be: Is there a pivot-exchange Gray code for the spanning trees of~$K_n$?
Alternatively, is there a Gray code listing that provides a more fine-grained explanation of Cayley's formula~$n^{n-2}$ for the number of spanning trees?
Similar questions can be asked for the complete bipartite graph~$K_{n,n}$.

Does Theorem~\ref{thm:fib} hold without the outerplane requirement for $m$ sufficiently large?
The only complete graphs~$K_n$ that violate the bound $t(K_n)\leq f_{\binom{n}{2}+1}$ arise for $n=4,5,6$.

\bibliographystyle{alpha}
\bibliography{refs}

\end{document}